\def\BibTeX{{\rm B\kern-.05em{\sc i\kern-.025em b}\kern-.08em
    T\kern-.1667em\lower.7ex\hbox{E}\kern-.125emX}}
\theoremstyle{plain}
\newtheorem{theorem}{Theorem}
\newtheorem{corollary}{Corollary}
\newtheorem{definition}{Definition}
\newcommand{\R}{\mathbb{R}}
\newcommand{\G}{\mathbf{G}}
\newcommand{\C}{\mathbf{C}}
\newcommand{\RR}{\mathbf{R}}
\newcommand{\mysqrt}[1]{\left( #1 \right)^\frac{1}{2}}
\newcommand{\lcurl}[1]{\left\{ #1 \right\}}
\newcommand{\lpar}[1]{\left( #1 \right)}
\newcommand{\dnu}[1]{\frac{d}{d \nu} #1}
\newcommand{\suchthat}{\,\ifnum\currentgrouptype=16 \middle\fi|\,}
\newcommand{\pq}{\phi_1}
\newcommand{\pw}{\phi_2}
\newcommand{\ophi}{\overline{\phi}}
\newcommand{\ophione}{\overline{\phi}_1}
\newcommand{\ophitwo}{\overline{\phi}_2}
\newcommand{\oxone}{\overline{X}_1}
\newcommand{\oxtwo}{\overline{X}_2}
\newcommand{\OG}{\overline{G}}
\newcommand{\xai}[1]{\overline{X}_{A, #1}}
\DeclareMathOperator*{\argmax}{arg\,max}
\definecolor{myorange}{HTML}{FF9300}
\def\orangelozenge{\mathbin{\color{myorange}\blacklozenge}}
\newcolumntype{L}{>{\centering\arraybackslash}m{0.45\linewidth}}
\begin{document}
\title{Alliance Mechanisms in General Lotto Games}
\author{Vade Shah, \IEEEmembership{Graduate Student Member, IEEE}, Jason R. Marden, \IEEEmembership{Fellow, IEEE}
\thanks{This work is supported by AFOSR grants \#FA9550-21-1-0203 and \#FA9550-25-1-0245 and the NSF GRFP grant \#2139319.}%
\thanks{V. Shah ({\tt\small vade@ucsb.edu}) and J. R. Marden are with the Department of Electrical and Computer Engineering at the University of California, Santa Barbara, CA.}%
}

\maketitle

\begin{abstract}
    How do different alliance mechanisms compare? In this work, we analyze various methods of forming an alliance in the Coalitional General Lotto game, a simple model of competitive resource allocation. In the game, Players 1 and 2 independently compete against a common Adversary by allocating their limited resource budgets towards separate sets of contests; an agent wins a contest by allocating more resources towards it than their opponent. In this setting, we study three alliance mechanisms: budget transfers (resource donation), contest transfers (contest redistribution), and joint transfers (both simultaneously). For all three mechanisms, we study when they present opportunities for collective improvement (the sum of the Players' payoffs increases) or mutual improvement (both Players' individual payoffs increase). In our first result, we show that all three are fundamentally different with regards to mutual improvement; in particular, mutually beneficial budget and contest transfers exist in distinct, limited subsets of games, whereas mutually beneficial joint transfers exist in almost all games. However, in our second result, we demonstrate that all three mechanisms are equivalent when it comes to collective improvement; that is, collectively beneficial budget, contest, and joint transfers exist in almost all game instances, and all three mechanisms achieve the same maximum collective payoff. Together, these results demonstrate that differences between mechanisms depend fundamentally on the objective of the alliance.
\end{abstract}

\begin{IEEEkeywords}
game theory, multi-agent systems, coalition formation, Colonel Blotto, General Lotto
\end{IEEEkeywords}

\section{Introduction}\label{sec:intro}

\IEEEPARstart{F}{orming} an alliance is a fundamental strategy for gaining competitive advantage in adversarial multi-agent settings such as product development \cite{elmuti2001overview, culpan2002global, king2003complementary}, energy markets \cite{van2015power, de2020cooperatives, tushar2018transforming}, and political elections \cite{blais2007making, di1998electoral, spoon2015alone}. In the control community, the questions of whether and when to form alliances have long been studied in the context of multi-agent coordination \cite{bitar2012, kaewpuang2013framework, zhang2019competition, zhang2020} and network security \cite{saad2009coalitional, bou2013, tushar2018transforming}, where coalitional and cooperative game theory have been applied to characterize or compute stable alliances and quantify the benefits of alliance membership. Despite extensive work on the stability and benefits of alliances, the actual mechanisms by which alliances are formed—such as resource exchange, information sharing, or treaties and truces—have received less attention. While these mechanisms are implicitly assumed in existing analyses, their unique influences on agent performance remain underexplored, warranting further study.

To investigate how different mechanisms compare, this work turns to the \emph{Coalitional General Lotto} game, a well-studied adversarial resource allocation problem. In the game (Figure \ref{fig:simple_example}), Player 1 and Player 2 independently compete against a common Adversary by allocating their limited resource \emph{budgets} (e.g., capital, energy) across multiple valued \emph{contests} (e.g., consumer markets, territories). An agent wins a contest and its associated value by allocating more resources towards it than their opponent. First introduced in \cite{kovenock2012coalitional}, the model offers a straightforward testbed for analyzing alliances, since Players 1 and 2 (entangled in a so-called `enemy-of-my-enemy' relationship) may collude against the Adversary to improve their individual outcomes. Over the past decade, a growing body of work has extended this model to study different alliance mechanisms in the Coalitional General Lotto game \cite{shah2024battlefield, shah2024inefficient, chandan2020showing, chandan2022art, gupta2014three, gupta2014three2, heyman2018colonel} and other similar variants \cite{diaz2023beyond, diaz2024strategic, paarporn2021division}.

\begin{figure}
    \centering
    \includegraphics[width=\linewidth]{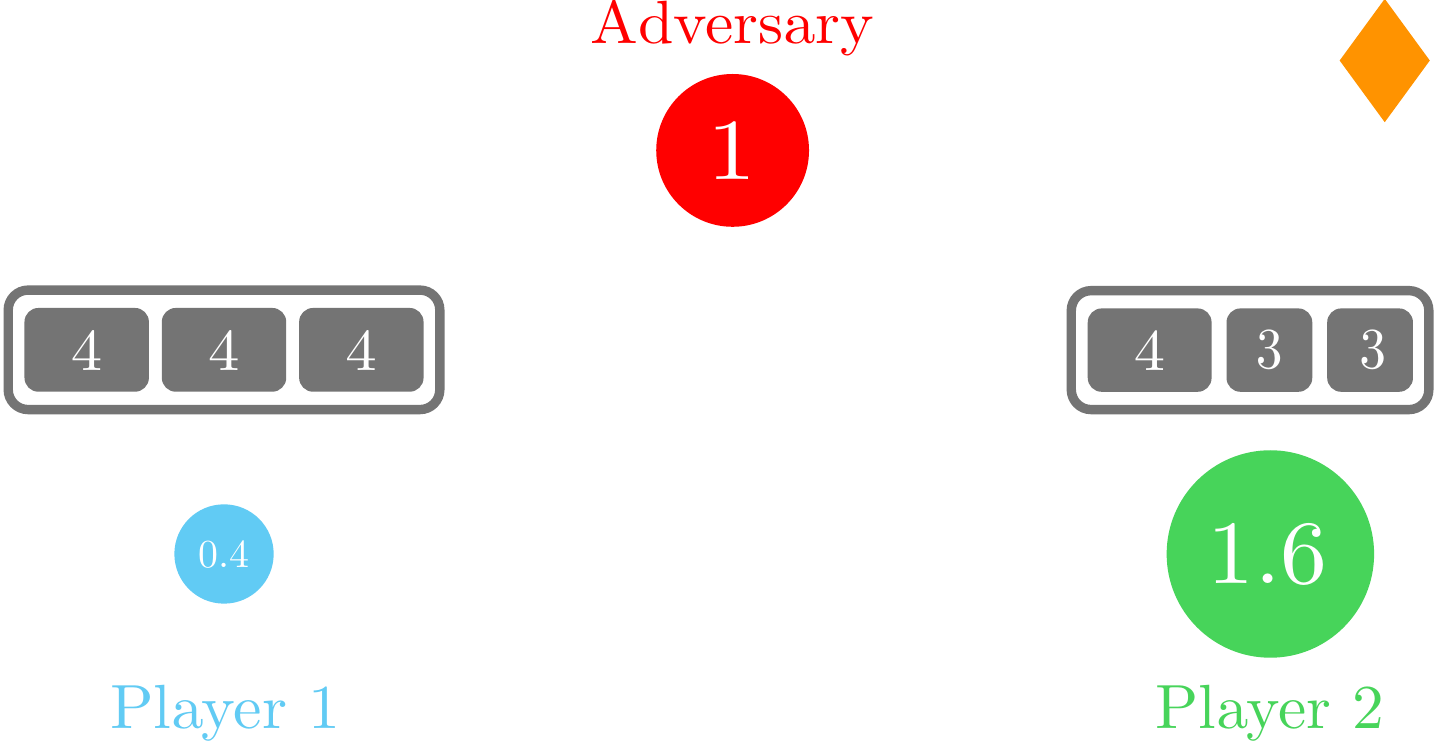}
    \caption{An example of a Coalitional General Lotto Game. Players 1 and 2 are equipped with budgets $X_1 = 0.4$ and $X_2 = 1.6$, respectively, and compete against the Adversary across sets of contests with cumulative valuation $4 + 4 + 4 = 12$ and $4 + 3 +  3 = 10$; we summarize these game parameters with the tuple $(12, 10, 0.4, 1.6)$. Throughout the text, we refer to this example using the symbol $\orangelozenge$.}
    \label{fig:simple_example}
\end{figure}

Among the mechanisms studied in the literature, perhaps the most natural is the \emph{budget transfer} introduced in \cite{kovenock2012coalitional}. In a budget transfer, one Player (say, Player 1) donates a portion of their budget to the other Player, so Player 1's budget decreases while Player 2's increases. Two key results emerge in this setting. First, in a nontrivial subset of game instances, the Players can form a \emph{mutually beneficial} alliance, one that increases both Players’ individual payoffs, through a budget transfer—even though one Player appears to simply give away resources. The intuition is that the transfer endows the recipient with greater resources, but it also shifts the Adversary’s focus away from the donor, thereby improving both Players' outcomes. The second result is that in almost all game instances, the Players can form a \emph{collectively beneficial} alliance, one that increases the collective payoff (the sum of their payoffs), through a budget transfer.

In this work, we introduce a novel alliance mechanism as an alternative to budget transfers, which we refer to as a \emph{contest transfer}. In a contest transfer, one Player (say, Player 1) forgoes competing in a subset of contests, allowing Player 2 to compete for them instead. Thus, the total value of contests available to Player 1 decreases while that for Player 2 increases. The notion of a contest transfer is well-motivated by a number of real-world phenomena such as electoral endorsements in political campaigns \cite{spoon2015alone} and non-compete agreements in business domains \cite{igami2022measuring}. However, despite their prevalence, the strategic opportunities offered by contest transfers have not been studied in a game-theoretic framework.

The goal of this work is to compare budget transfers, contest transfers, and \emph{joint transfers}, i.e., simultaneous transfers of both budgets and contests, with respect to their ability to generate different kinds of alliances. Specifically, we ask:

\vspace{0.2cm}

\begin{enumerate}[label=Q\arabic*:]
    \item Do budget, contest, and joint transfers differ in their ability to generate \emph{mutually} beneficial alliances?
    \item Do budget, contest, and joint transfers differ in their ability to generate \emph{collectively} beneficial alliances?
\end{enumerate}

We address these questions by asking whether a set of game parameters—the Players' budgets and the cumulative contest valuations (Figure \ref{fig:simple_example})—generates opportunities for mutually or collectively beneficial transfers of the given form. In the main text, we answer each question as follows:

\vspace{0.2cm}

\noindent A1: \, In \textbf{Theorem \ref{thm:mut_ben_diff}}, we show that when it comes to mutual benefit, the three mechanisms are distinct: while budget and contest transfers provide opportunities for mutual improvement only in limited, partially overlapping subsets of the parameter space, \emph{joint} transfers do so almost everywhere (in a measure-theoretic sense).

\vspace{0.2cm}

\noindent A2: \, In \textbf{Theorem \ref{thm:col_ben_equiv}}, we demonstrate that when Players seek to form collectively beneficial alliances, budget, contest, and joint transfers are equivalent: they all arise in the same subset of the parameter space, and they all yield the same maximum collective payoff\footnote{Comparing the collective payoff induced by different kinds of transfers is reasonable when studying collective benefit (Q2/A2), but this comparison is not particularly well-defined for mutual benefit (Q1/A1), since improvements in collective payoff do not imply mutual improvements in individual payoffs.}.

\vspace{0.2cm}

Taken together, our results reveal that the choice of the alliance mechanism should depend on the alliance objective: when Players seek to maximize collective payoffs, all three mechanisms are equivalent and their choice is irrelevant, but when they seek mutual improvement, the mechanisms diverge and their choice is crucial. Moreover, joint transfers provide an especially powerful tool, enabling mutual gains that neither budget nor contest transfers achieve alone.

\section{Model}

In this section, we introduce the model of the Coalitional General Lotto game in which two Players independently compete against a common Adversary across disjoint sets of contests. For ease of exposition, we begin with a discussion of the classical one-versus-one General Lotto game between a single Player and the Adversary.

\subsection{General Lotto Game}

In the \emph{General Lotto} game, two agents (say, Player 1 and the Adversary) compete across a set of $n$ contests. Player 1 and the Adversary are endowed with arbitrarily divisible resource budgets $X_1 \in \R_{>0}$ and $X_A \in \R_{>0}$, respectively, which they allocate across the contests. The agent who allocates a greater level of resources towards the $k^\mathrm{th}$ contest wins its valuation $v^k \in \R_{>0}$. The budgets and the contest valuations are known to both agents, but neither knows how their opponent will allocate their budget across the contests. An allocation decision for Player 1 is a vector $D_1 \in \R_{\geq 0}^n$, where $D_1^k$ is the amount of budget that they allocate to contest $k$; the decision for the Adversary $D_A$ is defined similarly. Player 1's payoff for a given pair $(D_1, D_A)$ is of the form
\begin{equation*}
    U_1^{\rm GL}(D_1, D_A) = \sum_{k=1}^n v^k \cdot I \{D_{1}^k \geq D_{A}^k\},
\end{equation*}
where $I\{\cdot\}$ is the usual indicator function, and the Adversary's payoff is $U_{A}^{\rm GL}(D_{1}, D_{A}) = \sum_{k=1}^n v^k - U_{1}^{\rm GL}(D_1, D_A)$.

A \emph{strategy} for Player 1 is a $n$-variate distribution function $F_1$ satisfying
\begin{equation}
    \mathbb{E}_{D_1 \sim F_1}\left[\sum_{k = 1}^n D_1^k \right] \leq X_1,
\end{equation}
meaning that Player 1 meets their budget constraint \emph{in expectation}. The Adversary's strategy $F_A$ is defined similarly.
The agents' Nash equilibrium strategies and payoffs in this setting are well-understood and have been characterized in \cite{kovenock2021generalizations}. For ease of exposition, we omit the equilibrium strategies and instead present only the equilibrium payoffs in the following Corollary from \cite{kovenock2021generalizations}:
\begin{corollary}[Kovenock and Roberson, 2021]
    For any General Lotto game with budgets $X_1$, $X_A$ and contest valuations $v^1, \dots, v^n$, a Nash equilibrium exists, and the unique Nash equilibrium payoffs for Player 1 and the Adversary are
    \begin{align}
        U_1^{\rm NE}(\phi, X_1, X_A) &\triangleq \begin{cases}
            \phi \left( \frac{X_1}{2 X_A} \right) & X_1 \leq X_A \\
            \phi \left( 1 - \frac{X_A}{2 X_1} \right) & X_1 > X_A \label{eq:player_payoff}
        \end{cases} \\
        \text{and } \; U_A^{\rm NE}(\phi, X_1, X_A) &\triangleq \phi - U_1^{\rm NE}(\phi, X_1, X_A), \label{eq:adversary_payoff}
    \end{align}
    respectively, where $\phi \triangleq \sum_{k = 1}^n v^k$.
\end{corollary}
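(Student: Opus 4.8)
The plan is to build explicit equilibrium strategies for the $n$-contest game by pasting together, contest by contest, appropriately rescaled copies of the classical single-contest General Lotto solution, and then to use the fact that the game is constant-sum to conclude both existence and uniqueness of the equilibrium payoffs. The reason such a pasting works --- and is unavailable for Colonel Blotto --- is that here the budget constraint is imposed only in expectation, so a player's payoff $\sum_k v^k I\{D_1^k \ge D_A^k\}$ and the constraint $\mathbb{E}[\sum_k D^k] \le X$ both decompose additively across contests; hence a player may choose the $n$ contest-marginals independently, subject only to the scalar requirement that their means sum to at most the budget.

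Take first $X_1 \le X_A$ and put $c_k \triangleq 2 X_A v^k/\phi$, so $\sum_k c_k = 2X_A$. As candidate strategies, let the Adversary allocate an independent $\mathrm{Uniform}[0,c_k]$ amount to contest $k$, and let Player~1 allocate, independently across $k$, an amount equal to $0$ with probability $1 - X_1/X_A$ and $\mathrm{Uniform}[0,c_k]$ otherwise. A direct computation gives $\sum_k \mathbb{E}[D_A^k] = X_A$ and $\sum_k \mathbb{E}[D_1^k] = X_1$, so both strategies are feasible, and on each contest $k$ the pair is exactly the single-prize General Lotto equilibrium for prize $v^k$ and budgets $(X_1 v^k/\phi,\, X_A v^k/\phi)$, which share the ratio $X_1/X_A$. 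For $X_1 > X_A$ we take the mirror construction, with $c_k \triangleq 2X_1 v^k/\phi$, Player~1 playing $\mathrm{Uniform}[0,c_k]$ and the Adversary playing $0$ with probability $1 - X_A/X_1$ and $\mathrm{Uniform}[0,c_k]$ otherwise.

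The heart of the proof is to show these candidates are \emph{security strategies}: Player~1's candidate guarantees Player~1 at least the claimed payoff against \emph{every} Adversary strategy, and symmetrically the Adversary's candidate caps Player~1's payoff at the claimed value. Fix Player~1's candidate in the case $X_1 \le X_A$ and let the Adversary best-respond. By the additive decomposition, this is equivalent to choosing per-contest budgets $x_k \ge 0$ with $\sum_k x_k \le X_A$ to maximize $\sum_k v^k g_k(x_k)$, where $g_k(x_k)$ is the largest winning probability obtainable in contest $k$ against Player~1's marginal. Two observations close it. First, because Player~1's contest marginal is an atom at $0$ together with a $\mathrm{Uniform}[0,c_k]$ part, the winning-probability-versus-bid function on contest $k$ is increasing, concave and piecewise affine on $(0,\infty)$, so a Jensen-type argument shows the Adversary optimally concentrates $x_k$ at a single point. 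Second, $c_k$ is proportional to $v^k$, so the per-contest budget returns are governed by the common ratio $v^k/c_k = \phi/(2X_A)$; bounding each $v^k g_k(x_k)$ above by an affine function --- with intercept equal to the ``free'' baseline $v^k(1 - X_1/X_A)$ that Player~1's atom at $0$ concedes and with the common slope --- and summing over $k$ with $\sum_k x_k \le X_A$ then yields Adversary payoff at most $\phi(1 - X_1/(2X_A))$, with equality attained (e.g.\ at $x_k = c_k/2$). Hence Player~1 secures $\phi X_1/(2X_A)$. The reverse inequality, and both inequalities in the case $X_1 > X_A$, follow from the identical template after swapping the roles of the ``atom-at-$0$'' player and the ``uniform'' player and recomputing the analogous constants.

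Since the game is constant-sum --- the Adversary receives $\phi - U_1^{\rm GL}$ --- the two matching security levels force the value to equal the claimed $U_1^{\rm NE}(\phi,X_1,X_A)$, which at once shows that the candidate profile is a Nash equilibrium, that an equilibrium exists, and that every equilibrium yields the payoffs in \eqref{eq:player_payoff}--\eqref{eq:adversary_payoff}. I expect the delicate step to be the per-contest best-response analysis, and specifically the bookkeeping around the weaker player's atom at $0$: that atom makes $g_k$ discontinuous at the origin and hence not concave there, so one must argue carefully --- via a Jensen inequality restricted to the strictly positive part of the support, combined with a feasibility bound on how much mass can be shifted away from $0$ --- that point-mass allocations are indeed optimal. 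Carrying this out exactly, rather than merely to leading order, is what pins down the precise factor $\tfrac12$ in the payoff formulas and, through the security-level argument, the uniqueness of the equilibrium payoffs.
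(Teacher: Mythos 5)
Your proposal is sound, but note that the paper does not prove this statement at all: it is imported verbatim as a corollary of Kovenock and Roberson (2021), so there is nothing internal to compare against. What you have written is essentially the classical self-contained derivation (going back to Bell--Cover and Hart for a single prize, extended to heterogeneous valuations exactly as you describe): the expectation-only budget constraint lets both the payoff and the constraint decompose across contests, the per-contest marginals are the rescaled single-prize equilibria with supports $[0, c_k]$ proportional to $v^k$, and the matching security levels in a constant-sum game give existence, the value, and uniqueness of equilibrium payoffs in one stroke. All the computations check out, including $\sum_k \mathbb{E}[D_A^k]=X_A$, $\sum_k\mathbb{E}[D_1^k]=X_1$, and the equality case at $x_k=c_k/2$. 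One remark on the step you flag as delicate: the atom at $0$ causes no real trouble, because the affine majorant you already introduce --- intercept $v^k(1-X_1/X_A)$ and slope $v^k(X_1/X_A)/c_k = X_1\phi/(2X_A^2)$ --- dominates the Adversary's per-contest payoff on \emph{all} of $[0,\infty)$ (it is nonnegative at $y=0$ where the true payoff is $0$, coincides with it on $(0,c_k]$, and exceeds $v^k$ beyond $c_k$). Taking expectations of this majorant under an arbitrary, possibly randomized, Adversary strategy and summing over $k$ with $\sum_k\mathbb{E}[D_A^k]\le X_A$ gives the bound $\phi(1-X_1/(2X_A))$ directly, so no separate Jensen or concentration argument is needed and no concavity of $g_k$ has to be established. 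With that simplification your argument is a complete and correct proof of the cited result, which the paper itself only asserts by reference.
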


\begin{figure*}
    \includegraphics[width=\textwidth]{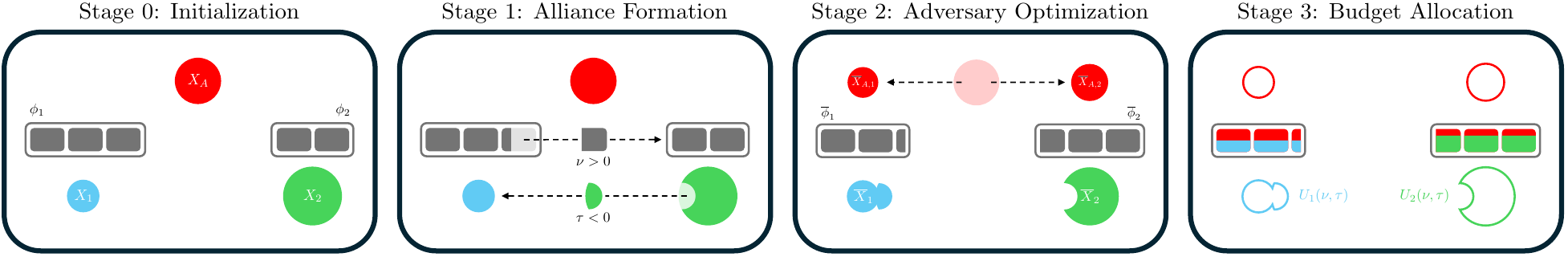}
    \caption{The stages of the Coalitional General Lotto game. In Stage 0, the game is initialized. In Stage 1, the two Players may form an alliance through a budget transfer, contest transfer, or both. In Stage 2, the Adversary determines how to optimally split their budget between the two standard General Lotto games. In Stage 3, the agents allocate their budgets and receive their payoffs.}
    \label{fig:stages}
\end{figure*}

\subsection{Coalitional General Lotto Game}

To study alliance mechanisms, we analyze the \emph{Coalitional General Lotto game}, a variant in which Players 1 and 2 compete against a common Adversary in separate standard General Lotto games. Our focus is on opportunities for the Players to form alliances, where informally, an alliance consists of transferring contests, resources, or both from one game to the other. Formally, the interaction unfolds over multiple stages, as illustrated in Figure \ref{fig:stages} and described in detail below.

\vspace{.2cm}
\noindent\textbf{Stage 0 (Initialization):} The game is initialized. Player 1, Player 2, and the Adversary have budgets $X_1$, $X_2$, and $X_A$, respectively, where their budgets are normalized such that $X_A = 1$. Player $i \in \{1, 2\}$ competes against the Adversary in General Lotto game $i$, where the $n_i$ contests have cumulative valuation $\phi_i$. A Coalitional General Lotto game instance is thus fully parameterized by the tuple $G = (\pq, \pw, X_1, X_2) \in \G = \R_{>0}^4$ which all agents observe before Stage 1.

\vspace{.2cm}
\noindent\textbf{Stage 1 (Alliance Formation):} The Players may form an alliance through a transfer of contests, budgets, or both.
\begin{definition}[Types of transfers]
    \ \\ \vspace{-1em}
    \begin{enumerate}[label=\alph*)]
        
        \item A \emph{budget transfer} $$\tau \in (-X_2, X_1)$$ is the amount of budget transferred from Player 1 to Player 2; a negative value of $\tau$ indicates that the net amount is transferred from Player 2 to Player 1.

        \item A \emph{contest transfer} $$\nu \in (-\pw, \pq)$$ is the amount of contest valuation transferred from Player 1 to Player 2; a negative value of $\nu$ indicates that the net amount is transferred from Player 2 to Player 1.
        
        \item A \emph{joint transfer} $$(\tau, \nu) \in (-X_2, X_1) \, \times \, (-\pw, \pq)$$ is a pair specifying the amount of contest valuation and budget transferred from Player 1 to Player 2.
    \end{enumerate}
\end{definition}

\noindent Transfers effectively change the parameters of a Coalitional General Lotto game. Specifically, the post-transfer budgets of Players 1 and 2 are
\begin{equation*}
    \oxone \triangleq X_1 - \tau \qquad \text{ and } \qquad \oxtwo \triangleq X_2 + \tau,
\end{equation*}
and the post-transfer contest valuations in games 1 and 2 are
\begin{equation*}
    \ophione \triangleq \pq - \nu \qquad \text{ and } \qquad \ophitwo \triangleq \pw + \nu,
\end{equation*}
respectively; we denote the new parameters induced by transfers $(\tau, \nu)$ as
$$\OG \triangleq (\ophione, \ophitwo, \oxone, \oxtwo),$$ which all agents (including the Adversary) observe before proceeding to Stage 2.\footnote{We take the budgets and contests to be arbitrarily divisible, meaning that any amount of contest valuation or budget can be feasibly transferred. Here, we remark that the decision regarding whether to perform any kind of transfer, and if so, how much to transfer, is one that the Players make jointly. Our focus is on when transfer parameters generate different kinds of alliances, not how the Players choose and agree upon these parameters.} Here, note that we write $\overline{G}$ with no dependence on the transfer $(\tau, \nu)$ for notational simplicity.

\vspace{.2cm}
\noindent\textbf{Stage 2 (Adversary Optimization):} The Adversary performs a best response to any transfers between the Players by optimally dividing their budget between the two General Lotto games. Mathematically, this entails solving the problem
\begin{equation}\label{eq:adversary_optimization}
    \argmax_{\underset{X_{A,1} + X_{A,2} \leq 1}{X_{A,1}, X_{A,2} \geq 0}} U_A^{\rm NE}(\ophione, \oxone, X_{A,1}) + U_A^{\rm NE}(\ophitwo, \oxtwo, X_{A,2}).
\end{equation}
We denote the solution to this problem as $\xai{1}$ and $\xai{2}$, which represents the Adversary's optimal allocations to Lotto games 1 and 2, respectively. The solution to \eqref{eq:adversary_optimization} is derived in \cite{kovenock2012coalitional}, which we summarize in Appendix \ref{app:cases}. 

\vspace{.2cm}
\noindent\textbf{Stage 3 (Budget Allocation):} In the third and final stage, the Players and the Adversary allocate their budgets towards the contests according to their equilibrium General Lotto strategies and receive their corresponding equilibrium payoffs \eqref{eq:player_payoff}, \eqref{eq:adversary_payoff}. Here, we rewrite the agents' payoffs to highlight the dependence on the transfer $(\tau, \nu)$; in particular, we write
\begin{align}
    U_1(\tau, \nu \suchthat G) &\triangleq U_1^{\rm NE}(\ophione, \oxone, \xai{1}) \label{eq:player1} \\
    U_2(\tau, \nu \suchthat G) &\triangleq U_1^{\rm NE}(\ophitwo, \oxtwo, \xai{2}) \label{eq:player2}
\end{align}
for Players 1 and 2, respectively. When clear, we may omit the dependence on $G$ and simply write $U_1(\tau, \nu)$ or $U_2(\tau, \nu)$.

Having established this framework, we now investigate how budget, contest, and joint transfers compare in creating opportunities for beneficial alliances.

\section{Results}

In this section, we answer the two main questions Q1 and Q2 introduced in Section \ref{sec:intro}, comparing different transfer mechanisms with respect to their ability to generate alliances. We begin by studying mutually beneficial transfers.

\subsection{Mutually Beneficial Alliances}

We first focus on alliances formed through mutually beneficial transfers, which we define as follows:

\begin{figure*}
    \includegraphics[width=\textwidth]{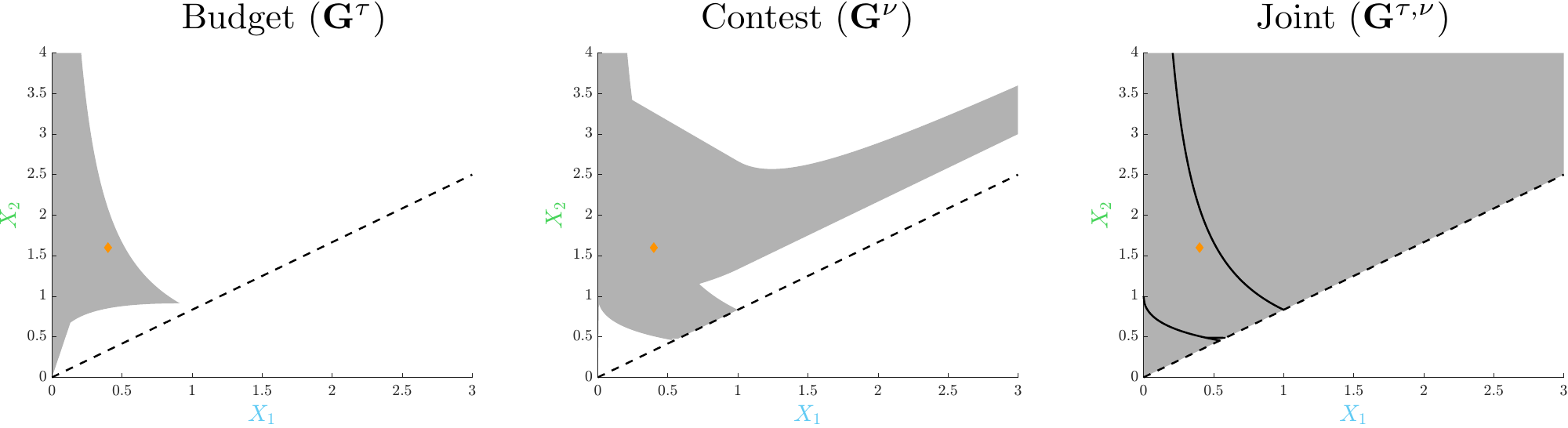}
    \caption{The regions in which mutually beneficial budget (left), contest (center), and joint transfers (right) exist, shown as grey shaded regions, plotted in the $X_1$-$X_2$ space for fixed $\phi_1 = 12$, $\phi_2 = 10$. Only games where $X_1 / \phi_1 \leq X_2 / \phi_2$ are plotted to avoid redundancy. The lines in the rightmost plot depict the measure-zero subset $\G \setminus \G^{\tau, \nu}$, and the game depicted in Figure \ref{fig:simple_example} is indicated using $\orangelozenge$. Observe that there are some games (e.g., $\orangelozenge$) that belong to both $\G^\tau$ and $\G^\nu$, but there are also games for which there exist mutually beneficial contest transfers, but not budget transfers (e.g., any point in $\G^\nu$ with $X_1 > 1$ in the center plot), and vice versa.}
    \label{fig:mutually_beneficial_regions}
\end{figure*}

\begin{definition}[Mutually beneficial transfers]
\ \\ \vspace{-1em}

    A transfer $(\tau', \nu')$ is \emph{mutually beneficial} if
    \begin{equation}\label{eq:mut_ben_transfer}
        U_1(\tau', \nu') > U_1(0, 0) 
        \; \, \text{and} \; \, U_2(\tau', \nu') > U_2(0, 0).
    \end{equation}
    
\end{definition}

Put simply, a transfer is mutually beneficial if it increases both Players' individual payoffs. Now, recall our first main question (Q1): do budget, contest, and joint transfers differ in their ability to generate \emph{mutually} beneficial alliances? In our first result, we characterize the subsets of games in which mutually beneficial budget, contest, and joint transfers arise and show that these subsets are structurally different, indicating that the three mechanisms do indeed differ.

\begin{theorem}[Properties of $\G^\tau$, $\G^\nu$, and $\G^{\tau, \nu}$]\label{thm:mut_ben_diff}
Let $\G^\tau$, $\G^\nu$, and $\G^{\tau, \nu}$ denote the subsets of $\G$ in which mutually beneficial budget, contest, and joint transfers exist, respectively.
\begin{enumerate}[label=\alph*)]
    \item $\G^\tau$, $\G^\nu$, and $\G^{\tau, \nu}$ are nonempty.
    \item $\G^\tau \cap \G^\nu$ is nonempty.
    \item $\G^\tau \setminus \G^\nu$, $\G^\nu \setminus \G^\tau$, and $\G^{\tau, \nu} \setminus (\G^\tau \cup \G^\nu)$ are nonempty.
    \item $\G \setminus \G^{\tau, \nu}$ has measure zero.
\end{enumerate}
\end{theorem}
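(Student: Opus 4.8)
The plan is to dispatch the existence parts (a)--(c) by explicit construction and to devote the real effort to the measure-theoretic part (d). Throughout I would use the closed forms of $U_1(\tau,\nu\mid G)$ and $U_2(\tau,\nu\mid G)$ obtained by substituting the Adversary's optimal split from~\eqref{eq:adversary_optimization} --- whose piecewise description is recorded in Appendix~\ref{app:cases} --- into the General Lotto payoff~\eqref{eq:player_payoff}. For (a) and (b): a budget transfer $\tau'$ is the joint transfer $(\tau',0)$ and a contest transfer $\nu'$ is $(0,\nu')$, so $\G^\tau\cup\G^\nu\subseteq\G^{\tau,\nu}$, and it suffices to prove (b), which then yields $\G^\tau,\G^\nu,\G^{\tau,\nu}\neq\varnothing$. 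For (b) I would verify directly that the running example $\orangelozenge=(12,10,0.4,1.6)$ lies in $\G^\tau\cap\G^\nu$: compute $U_1(0,0),U_2(0,0)$ from the relevant case of Appendix~\ref{app:cases}, then exhibit one explicit $\tau'\neq0$ and one explicit $\nu'\neq0$ satisfying~\eqref{eq:mut_ben_transfer}.

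For (c), the first two assertions are again settled by explicit parameter tuples: a $G$ with $X_1>1$ and $X_2$ chosen as in the center panel of Figure~\ref{fig:mutually_beneficial_regions}, for which some contest transfer is mutually beneficial but (after analyzing $U_1(\cdot,0)$ and $U_2(\cdot,0)$ over the whole feasible interval) no budget transfer is, and symmetrically a $G$ realizing $\G^\tau\setminus\G^\nu$. For the third assertion, $\G^{\tau,\nu}\setminus(\G^\tau\cup\G^\nu)\neq\varnothing$, I would combine part (d) with the fact that $\G^\tau\cup\G^\nu$ does not have full measure: exhibit an open box $O\subseteq\G$ disjoint from $\G^\tau\cup\G^\nu$ (a neighborhood of parameters with $X_1,X_2$ both large and nearly proportional to $\phi_1,\phi_2$), and observe that, since $\G\setminus\G^{\tau,\nu}$ has measure zero by (d), $O\cap\G^{\tau,\nu}$ is a nonempty subset of $\G^{\tau,\nu}\setminus(\G^\tau\cup\G^\nu)$.

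Part (d) is the heart, and the plan is a first-order analysis at the origin of the transfer space. The Adversary's optimal split together with the branch selection in~\eqref{eq:player_payoff} partition $\G$ into finitely many regions with algebraic boundaries (equalities such as $X_1=1$, $\phi_1/X_1=\phi_2/X_2$, or the thresholds of Appendix~\ref{app:cases}), on whose interiors $(\tau,\nu)\mapsto(U_1,U_2)$ is smooth --- indeed rational --- near the origin; these boundaries have measure zero. On the interior of a region, write $g_i\triangleq\nabla_{(\tau,\nu)}U_i(0,0)$. If some direction $d$ satisfies $g_1\cdot d>0$ and $g_2\cdot d>0$, then $U_i(td)>U_i(0,0)$ for all small $t>0$, and the transfer $td$ --- feasible because the transfer region is an open neighborhood of the origin --- is mutually beneficial, so $G\in\G^{\tau,\nu}$. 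Elementary planar geometry shows that such a $d$ fails to exist only if $g_1$ and $g_2$ are antiparallel or one of them vanishes, a situation contained in the vanishing of $g_1\wedge g_2\triangleq\partial_\tau U_1\,\partial_\nu U_2-\partial_\nu U_1\,\partial_\tau U_2$. Hence $\G\setminus\G^{\tau,\nu}$ is contained in the union of the (measure-zero) region boundaries and the sets $\{g_1\wedge g_2=0\}$, the latter being measure zero on every region on which $g_1\wedge g_2$ is not identically zero.

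\textbf{The main obstacle} is to establish this non-vanishing region by region. Using Appendix~\ref{app:cases} one computes $g_1,g_2$ explicitly --- this requires differentiating through the Adversary's optimal allocations $\xai{1},\xai{2}$ by the chain rule, which collapses pleasantly in the regions where the Adversary exactly neutralizes a Player (there $U_1\equiv(\phi_1-\nu)/2$, say, so $g_1=(0,-\tfrac12)$) --- and then one checks on each region that $g_1\wedge g_2\not\equiv0$ and that neither $g_1$ nor $g_2$ vanishes identically, which it suffices to verify at a single interior point such as $\orangelozenge$. The intuition that this succeeds is that raising $\tau$ or $\nu$ increases both $\oxtwo$ and $\ophitwo$ and hence tends to help Player~2 in both transfer coordinates, whereas Player~1's gradient reflects a genuine tension between the direct loss of budget and contests and the beneficial reallocation of the Adversary's budget, so the two gradients are generically not collinear, let alone antiparallel; turning this into the required identities across all of the Adversary's cases is the bulk of the work. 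Finally, were some region to have $g_1\equiv0$ or $g_2\equiv0$, that region would need to be shown to be itself measure zero (or treated by a second-order expansion) --- a contingency I would verify does not arise.
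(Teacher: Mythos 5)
Your proposal is correct in outline and, for the decisive part d), follows essentially the route the paper takes: the paper's own argument (deferred to the conference version \cite{shah2024battlefield}) is precisely the first-order analysis you describe, showing that $\nabla U_1$ and $\nabla U_2$ at $(\tau,\nu)=(0,0)$ fail to admit a common ascent direction only where they are antiparallel or ill-defined, a locus of measure zero; your only loose end there is the standard one you already flag, namely that $\{g_1\wedge g_2=0\}$ also contains the harmless same-direction-parallel locus, so any region where the wedge vanishes identically must be revisited directly. Where you genuinely diverge is in what carries the burden of parts a)--c). The paper's appendix is almost entirely devoted to an explicit closed-form characterization of $\G^\nu$ (splitting contest transfers into ``strategically consistent'' ones, where the Adversary's best-response case is unchanged, and ``strategically inconsistent'' ones, enumerated over five budget regions), and then reads off a)--c) from that characterization plus the example $\orangelozenge=(12,10,0.4,1.6)$ and the known description of $\G^\tau$ from \cite{kovenock2012coalitional}. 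You bypass the characterization and instead verify concrete numeric instances; this is legitimate, but be aware that certifying membership in $\G^\tau\setminus\G^\nu$ (or exhibiting an open box disjoint from $\G^\tau\cup\G^\nu$) requires ruling out \emph{every} feasible transfer in an open interval, across all of the Adversary's case transitions from Appendix~\ref{app:cases} --- which is exactly the finite but tedious analysis the paper's characterization packages once and for all, and which your plan should not understate. Your derivation of $\G^{\tau,\nu}\setminus(\G^\tau\cup\G^\nu)\neq\varnothing$ from part d) plus a positive-measure set avoiding $\G^\tau\cup\G^\nu$ is a cleaner argument than the paper's (which simply asserts an example), and it buys you robustness: any open set missed by both single-mechanism regions automatically witnesses the claim. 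Conversely, the paper's explicit characterization buys the precise region plots of Figure~\ref{fig:mutually_beneficial_regions} and reusable conditions such as \eqref{eq:r1_c1a_c1b_cond}, which your instance-by-instance verification does not provide.
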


\begin{proof}
    See Appendix \ref{app:contest_transfer_proof}.
\end{proof}

Theorem \ref{thm:mut_ben_diff} states that mutually beneficial budget and contest transfers arise in limited, non-overlapping subsets of the parameter space, and mutually beneficial joint transfers arise in a substantially larger region that contains both individual subsets. The proof of this result relies on a novel and nontrivial characterization of $\G^\nu$, the subset of games that generate mutually beneficial contest transfers. The characterization involves a careful, case-by-case analysis, which reflects the complexity inherent in mutual improvement. Statements a)–d) in Theorem \ref{thm:mut_ben_diff} provide a more refined understanding of the actual differences between the mechanisms. Below, we parse through and discuss each in greater detail.

\begin{enumerate}[label=\alph*),align=left, leftmargin=0pt,labelindent=1em,listparindent=0pt, itemindent=!]
    \item Mutually beneficial budget, contest, and joint transfers exist. The novel characterization of $\G^\nu$ in Appendix \ref{app:contest_transfer_proof} complements the well-established result regarding the existence of $\G^\tau$ from \cite{kovenock2012coalitional}. The existence of $\G^{\tau, \nu}$, characterized in a conference version of this paper \cite{shah2024battlefield}, follows trivially from the existence of $\G^\nu$ and $\G^\tau$, but its structure is fundamentally different. All three subsets are illustrated in Figure \ref{fig:mutually_beneficial_regions}.
    
    \item Mutually beneficial budget and contest transfers are not mutually exclusive; that is, there are instances where either mechanism can generate a mutually beneficial alliance. This can be seen in the left two panels of Figure \ref{fig:mutually_beneficial_regions}. 

    \item In certain game instances, budget transfers may generate mutually beneficial alliances, but not contest transfers, and vice versa, as shown in the left two panels of Figure \ref{fig:mutually_beneficial_regions}. Furthermore, there are instances where joint transfers generate mutually beneficial alliances, but contest or budget transfers do not. Thus, when Players seek to form mutually beneficial alliances, they may be limited in their choice of mechanism.

    \item In \emph{almost all} game instances—where `almost all' is used in the measure-theoretic sense—joint transfers can generate mutually beneficial alliances, as depicted in the rightmost panel of \ref{fig:mutually_beneficial_regions}. The proof proceeds by demonstrating that the gradients of the payoff functions $U_1$ and $U_2$ are diametrically opposed (i.e., $\nabla U_1 = -\gamma \nabla U_2$ for some $\gamma > 0$) or ill-defined only along a measure zero subset of $\G$, which is sufficient for establishing the result \cite{shah2024battlefield}. This highlights that joint transfers offer far greater opportunities than budget or contest transfers.
\end{enumerate}

Collectively, these results answer our second research question definitively: mechanism choice is critical for mutually beneficial alliances. While budget and contest transfers succeed only in limited, partially overlapping regions, joint transfers provide opportunities for mutual improvement in virtually all scenarios. This suggests that Players seeking individual gains should prioritize joint mechanisms when possible.

\begin{figure*}
    \includegraphics[width=\textwidth]{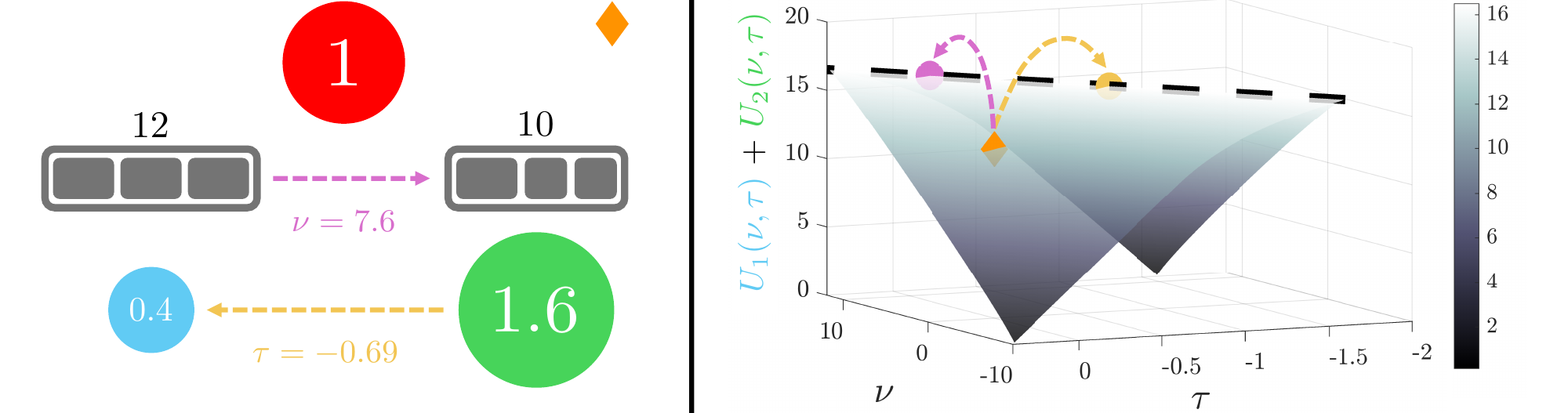}
    \caption{Left: an example of a Coalitional General Lotto game ($\orangelozenge$); Player $1$ and $2$ have budgets of $X_1 = 0.4$ and $X_2 = 1.6$ and compete for contests with valuations $12$ and $10$, respectively. The Players either perform a budget transfer $\tau = -0.69$ (mustard) or a contest transfer $\nu = 7.6$ (magenta). Right: the Players' collective payoff as a function of the transfer amount $\nu$, $\tau$. The symbol $\orangelozenge$ shows the collective payoff for the nominal setting with no transfers, i.e., $(\tau, \nu) = (0, 0)$. Observe that whether the Players perform a budget transfer $\tau = -0.69$ (mustard) or a contest transfer $\nu = 7.6$ (magenta), they achieve the same maximum collective payoff (dashed black line).}
    \label{fig:comparison_example}
\end{figure*}

\subsection{Collectively Beneficial Alliances}

Next, we consider alliances formed through a collectively beneficial transfer, which we define as follows:

\begin{definition}[Collectively beneficial transfers]
\ \\ \vspace{-1em}

    \item A transfer $(\tau^*, \nu^*)$ is \emph{collectively beneficial} if \begin{equation}\label{eq:joint_ben_transfer}
        U_1(\tau^*, \nu^*) + U_2(\tau^*, \nu^*) > U_1(0, 0) + U_2(0, 0).
    \end{equation}
    
\end{definition}

As the name suggests, collectively beneficial transfers improve the Players' collective payoff, but they need not be mutually beneficial. Now, recall our second main question (Q2): do budget, contest, and joint transfers differ in their ability to generate \emph{collectively} beneficial alliances? Our second result establishes that in this setting, the answer turns out to be yes—the three mechanisms are equivalent.

\begin{theorem}[Equivalence of transfer types]\label{thm:col_ben_equiv}
    For any Coalitional General Lotto game $G \in \G$,
    \begin{equation}\label{eq:collective_payoff_equivalence}
        \begin{aligned}
            \max_{\tau} \ U_1(\tau, 0) + U_2(\tau, 0)
            = \, &\max_{\tau, \nu} \ U_1(\tau, \nu) + U_2(\tau, \nu) \\
            = \, &\max_{\nu} \ U_1(0, \nu) + U_2(0, \nu).
        \end{aligned}
    \end{equation}
\end{theorem}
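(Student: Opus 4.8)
The plan is to collapse all three maximizations onto a single scalar by passing to the Adversary's payoff. First I would use the identity $U_1^{\rm NE}(\phi, X, X_A) + U_A^{\rm NE}(\phi, X, X_A) = \phi$ together with conservation of contest valuation ($\ophione + \ophitwo = \pq + \pw$): definitions \eqref{eq:player1}--\eqref{eq:player2} then give $U_1(\tau,\nu) + U_2(\tau,\nu) = (\pq + \pw) - V_A(\OG)$, where $V_A(\OG)$ denotes the optimal value of the Adversary's problem \eqref{eq:adversary_optimization} at parameters $\OG$. Maximizing the collective payoff over a class of transfers is thus equivalent to \emph{minimizing} $V_A(\OG)$ over the post-transfer parameters reachable by that class. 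Writing $S \triangleq X_1 + X_2$, all three reachable sets sit inside $\mathcal{R} \triangleq \{\OG : \oxone + \oxtwo = S,\ \ophione + \ophitwo = \pq + \pw,\ \text{entries} > 0\}$: a budget transfer moves $(\oxone,\oxtwo)$ along $\oxone + \oxtwo = S$ while fixing $\ophione = \pq$, $\ophitwo = \pw$; a contest transfer does the symmetric thing; and joint transfers sweep out all of $\mathcal{R}$.

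Next I would establish a uniform lower bound on $V_A$ over $\mathcal{R}$. For any $\OG \in \mathcal{R}$, the Adversary allocation $(\xai{1},\xai{2}) = (\oxone/S,\,\oxtwo/S)$ is feasible (nonnegative, summing to $1$), and a one-line check from \eqref{eq:player_payoff}--\eqref{eq:adversary_payoff}, splitting on whether $S \le 1$ or $S > 1$, shows $U_A^{\rm NE}(\phi, X, X/S) = \phi \cdot U_A^{\rm NE}(1, S, 1)$ for all $\phi, X > 0$ --- a quantity independent of $X$. Summing over the two games and using linearity of $U_A^{\rm NE}$ in its first argument, this allocation yields the Adversary exactly $(\pq + \pw)\,U_A^{\rm NE}(1, S, 1) = U_A^{\rm NE}(\pq + \pw, S, 1)$, no matter how the totals $\pq+\pw$ and $S$ are split across the two games. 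Hence $V_A(\OG) \ge U_A^{\rm NE}(\pq + \pw, S, 1)$ throughout $\mathcal{R}$, so under every one of the three mechanisms the collective payoff is at most $U_1^{\rm NE}(\pq + \pw, S, 1)$ --- the payoff a single merged Player of budget $X_1 + X_2$ facing the Adversary would obtain.

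Finally I would show this bound is attained by a pure budget transfer and, separately, by a pure contest transfer. Both reach the ``proportional'' configuration $\ophione/\oxone = \ophitwo/\oxtwo\ (= (\pq+\pw)/S)$: it is realized by the budget transfer $\tau = X_1 - S\pq/(\pq+\pw)$ and by the contest transfer $\nu = \pq - (\pq+\pw)X_1/S$, both lying in the required open intervals since $\pq,\pw,X_1,X_2 > 0$. At such a configuration the Adversary's objective in \eqref{eq:adversary_optimization}, restricted to $\xai{1} + \xai{2} = 1$ (where the optimum lies, as $U_A^{\rm NE}(\phi,X,\cdot)$ is increasing), is concave and $C^1$ in $\xai{1}$, and its derivative vanishes at $(\oxone/S,\oxtwo/S)$ because proportionality equalizes the two games' marginal values there; hence $(\oxone/S,\oxtwo/S)$ is the Adversary's best response and $V_A(\OG) = U_A^{\rm NE}(\pq+\pw,S,1)$ exactly. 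Therefore $\max_\tau[U_1(\tau,0)+U_2(\tau,0)] = \max_\nu[U_1(0,\nu)+U_2(0,\nu)] = U_1^{\rm NE}(\pq+\pw,S,1)$, and since the joint maximum lies between the budget maximum (every budget transfer is a joint transfer) and the upper bound $U_1^{\rm NE}(\pq+\pw,S,1)$, it equals the same value, establishing \eqref{eq:collective_payoff_equivalence}.

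I expect the only non-routine step to be the tightness claim --- that at the proportional configuration the uniform-scaling allocation $(\oxone/S,\oxtwo/S)$ is genuinely optimal for the Adversary, not merely feasible. That is where concavity of $X_A \mapsto U_A^{\rm NE}(\phi,X,X_A)$ (equivalently, the explicit best-response description summarized in Appendix \ref{app:cases}) does the work; the remainder is the reduction to $V_A$ plus the two-case evaluation of $U_A^{\rm NE}(\phi,X,X/S)$.
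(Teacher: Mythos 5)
Your proposal is correct, and it takes a genuinely different route from the paper's proof. The paper works directly with the Players' side: it walks through the Adversary's best-response Cases one at a time, computes $\frac{d}{d\nu}\bigl(U_1+U_2\bigr)$ in each, shows every optimizing transfer drives the game to the proportional configuration $\oxone/\ophione=\oxtwo/\ophitwo$, and then evaluates the collective payoff there separately for $X_1+X_2\geq 1$ and $X_1+X_2<1$ to see it is the same constant for all three mechanisms. You instead exploit the constant-sum structure $U_1+U_2=(\pq+\pw)-V_A(\OG)$ to recast all three maximizations as minimizations of the Adversary's optimal value over reachable parameter sets, prove the uniform, case-free lower bound $V_A(\OG)\geq U_A^{\rm NE}(\pq+\pw,\,X_1+X_2,\,1)$ by exhibiting the proportional allocation $(\oxone/S,\oxtwo/S)$ as feasible with a value independent of how the totals are split, and then show tightness at proportional configurations via concavity of $X_A\mapsto U_A^{\rm NE}(\phi,X,X_A)$ and a first-order condition—configurations reachable by a pure budget transfer and by a pure contest transfer, each verified to lie in its open interval. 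Your argument is shorter, sidesteps the seven-Case taxonomy entirely, is insensitive to the Adversary's best-response multiplicity (since the collective payoff depends only on $V_A$), and yields the more informative closed form $\max = U_1^{\rm NE}(\pq+\pw,\,X_1+X_2,\,1)$, i.e., the value of a single merged Player. What the paper's longer route buys in exchange is directional information—monotonicity of the collective payoff in $\nu$ within each Case and the sign of the optimal transfer—which feeds its discussion of collectively versus mutually beneficial alliances but is not needed for the equality \eqref{eq:collective_payoff_equivalence} itself.
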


\begin{proof}
    See Appendix \ref{app:equivalence_proof}.
\end{proof}

Theorem \ref{thm:col_ben_equiv} highlights that when it comes to maximizing the collective payoff, there is effectively no difference between budget, contest, and joint transfers. Note that this Theorem says two things: first, opportunities for collective improvement arise in the same subsets of games for all three mechanisms, and second, that the magnitude of improvement is also the same. The key implication of this result is that performing budget and contest transfers simultaneously cannot enhance collective performance; forming an alliance through either mechanism alone is sufficient. This is exemplified in Figure \ref{fig:comparison_example}; no matter which transfer the Players perform, their maximum collective payoff is the same.

\section{Conclusion}

In this paper, we study three alliance mechanisms—resource sharing (budget transfers), non-compete agreements (contest transfers), and combinations of the two (joint transfers)—using the framework of the Coalitional General Lotto game, a competitive resource allocation problem where two players compete against a common Adversary. We compare these transfers on two different performance objectives, collective benefit and mutual benefit. While intuition might suggest that different transfer mechanisms would consistently yield different outcomes, our results demonstrate otherwise: although the mechanisms are equivalent when it comes to collective benefit (Theorem \ref{thm:col_ben_equiv}), they differ greatly in their ability to generate opportunities for mutual benefit (Theorem \ref{thm:mut_ben_diff}).

Returning to our opening question about comparing alliance mechanisms: the answer depends entirely on what the alliance seeks to achieve. Practically, this implies that organizations seeking to maximize joint performance can choose any alliance mechanism based on convenience or feasibility. However, when individual partners require guaranteed improvement, the choice of mechanism becomes crucial. In this setting, our results suggest that joint approaches should be the default consideration, as they offer far greater flexibility than single-mechanism alliances.

Several interesting directions emerge from this work. An important future study entails extending these results to the network setting with several competitors and contests; a similar study has been performed in \cite{diaz2023beyond} for a different payoff model. Another question is to analyze whether other strategic maneuvers, such as information sharing (if the Players hold priors on the Adversary's allocation strategy, for example), would also create distinct alliance opportunities.

\appendix

\subsection{Adversary's Best Response}\label{app:cases}

The Adversary's best response to any transfer performed by the Players can be completely characterized by seven distinct Cases \cite{kovenock2012coalitional}. Four of these Cases address the setting where $\frac{\oxone}{\ophione} \leq \frac{\oxtwo}{\ophitwo}$, meaning that the ratio of Player 1's budget to contest valuation (after a transfer has been performed) is less than that of Player 2; the remaining three Cases address the setting where $\frac{\oxone}{\ophione} > \frac{\oxtwo}{\ophitwo}$ and can be obtained by a simple swapping of indices. Hence, without loss of generality, we describe below the Adversary's behavior for the four Cases that comprehensively describe all games
\begin{equation}
    \OG \in \G_{1 \leq 2} \triangleq \lcurl{G \in \G \suchthat \frac{\oxone}{\ophione} \leq \frac{\oxtwo}{\ophitwo}}
\end{equation}
as illustrated in Figure \ref{fig:cases}.

\begin{figure}
    \centering
    \includegraphics[width=\linewidth]{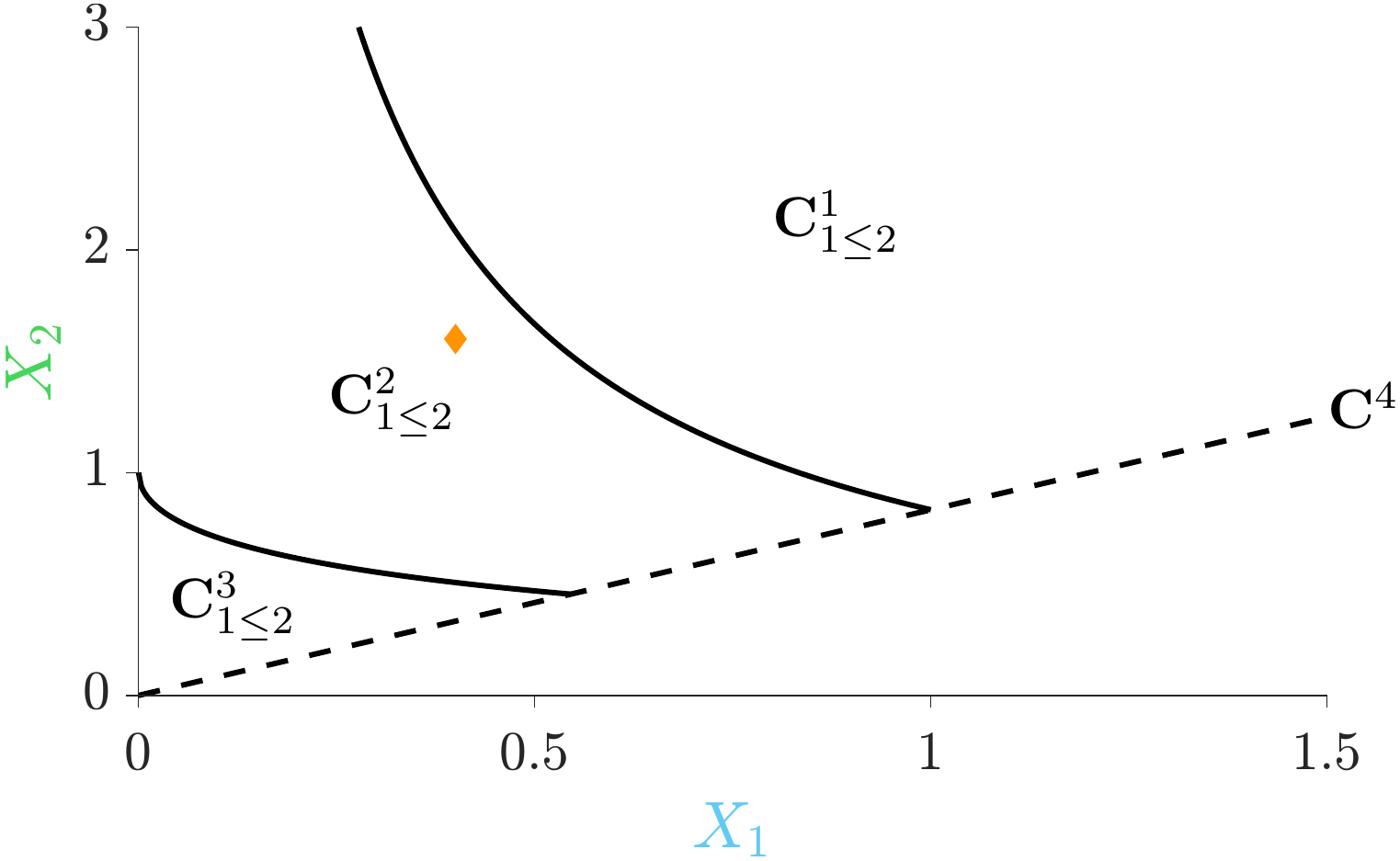}
    \caption{The Cases that delineate the form of the Adversary's best response, plotted in the $X_1$-$X_2$ space for fixed $\phi_1 = 12$, $\phi_2 = 10$. The game $G^1 = (12, 10, 0.4, 1.6)$ (Figure \ref{fig:simple_example}) is indicated by $\orangelozenge$.}
    \label{fig:cases}
\end{figure}

\vspace{.2cm}
\noindent\emph{Case 1:} If $\OG$ belongs to
    \begin{equation}\label{eq:case_1}
        \C_{1 \leq 2}^1 \triangleq \lcurl{\OG \in \G_{1 \leq 2} \suchthat \frac{\oxone}{\ophione} \neq \frac{\oxtwo}{\ophitwo} \text{ and } \frac{\ophitwo}{\ophione} \leq \oxone \oxtwo},
    \end{equation}
    then according to \eqref{eq:adversary_payoff}, every unit of budget that the Adversary allocates towards Lotto game 1 has a greater marginal payoff than if it were allocated to Lotto game 2. Hence, the Adversary allocates their entire budget towards game 1 ($\xai{1} = 1$).

\vspace{.2cm}
\noindent\emph{Case 2:} If $\OG$ belongs to
    \begin{equation}\label{eq:case_2}
        \C_{1 \leq 2}^2 \triangleq \lcurl{\OG \in \G_{1 \leq 2} \suchthat 0 < 1 - \mysqrt{\frac{\oxone \oxtwo \ophione}{\ophitwo}} \leq \oxtwo},
    \end{equation}
    then the Adversary's budget is sufficiently large such that it is optimal to allocate $\xai{1} \geq \oxone$, thereby hitting the range of diminishing returns (see \eqref{eq:adversary_payoff}) in Lotto game 1. Thus, they allocate budget towards Lotto game 2 such that the marginal payoffs in each game are equal, yielding $\xai{1} = \mysqrt{\frac{\oxone \oxtwo \ophione}{\ophitwo}}$.

\vspace{.2cm}
\noindent\emph{Case 3:} If $\OG$ belongs to
    \begin{equation}\label{eq:case_3}
        \C_{1 \leq 2}^3 \triangleq \lcurl{\OG \in \G_{1 \leq 2} \suchthat 1 - \mysqrt{\frac{\oxone \oxtwo \ophione}{\ophitwo}} > \oxtwo},
    \end{equation}
    then the Adversary's budget is larger than both Players' combined (i.e., $X_1 + X_2 \leq 1$). Thus, the Adversary equates the marginal payoffs in the range of diminishing returns in both games, which yields $\xai{1} = \frac{\mysqrt{\oxone \ophione}}{\mysqrt{\oxone \ophione} + \mysqrt{\oxtwo \ophitwo}}$.

\vspace{.2cm}
\noindent\emph{Case 4:} If $\OG$ belongs to
    \begin{equation}\label{eq:case_4}
        \C^4 \triangleq \lcurl{\OG \in \G_{1 \leq 2} \suchthat \frac{\oxone}{\ophione} = \frac{\oxtwo}{\ophitwo} \text{ and } X_1 + X_2 \geq 1},
    \end{equation}
    then the marginal payoff in the two games is already equal. Thus, the Adversary is indifferent between the two games, and any allocation satisfying $\xai{i} \leq \overline{X}_i$, $i \in \{1, 2\}$ is optimal.

\subsection{Proof of Theorem \ref{thm:mut_ben_diff}}\label{app:contest_transfer_proof}

In this section, we provide the proof of Theorem \ref{thm:mut_ben_diff}, which first requires characterizing $\G^\nu$, the subset of $\G$ in which mutually beneficial contest transfers exist. After characterizing the region, the statements in the Theorem follow immediately. Before delving into the characterization, which is lengthy and tedious, we provide some high-level intuition to the reader. As a reminder, a contest transfer is mutually beneficial if it causes both Players' payoffs to increase. Below, we show that mutually beneficial transfers can be either
\begin{itemize}[leftmargin=*]
    \item \emph{strategically consistent}, meaning that the structure of the Adversary's best response does not change in response to the transfer (i.e., $G$ and $\OG$ belong to the same Case);
    \item \emph{strategically inconsistent}, meaning that the structure of the Adversary's best response does change in response to the transfer (i.e., $G$ and $\OG$ belong to different Cases).
\end{itemize}

To present these results, we introduce some additional notation that simplifies the forthcoming discussion. Let $\C_{1 > 2}^1$, $\C_{1 > 2}^2$, and $\C_{1 > 2}^3$ denote the counterparts of $\C_{1 \leq 2}^1$, $\C_{1 \leq 2}^2$, and $\C_{1 \leq 2}^3$ (i.e., the same as \eqref{eq:case_1}, \eqref{eq:case_2}, and \eqref{eq:case_3}, respectively, with the indices swapped). Further, define the subsets
\begin{align*}
    \RR^1 &= \{ G \in \G \ \big\vert \ X_1 \geq 1, \ X_2 \geq 1 \} \\
    \RR^2 &= \{ G \in \G \ \big\vert \ X_1 \geq 1, \ X_2 < 1 \} \\
    \RR^3 &= \{ G \in \G \ \big\vert \ X_1 < 1, \ X_2 \geq 1 \} \\
    \RR^4 &= \{ G \in \G \ \big\vert \ X_1 < 1, \ X_2 < 1, X_1 + X_2 \geq 1 \} \\
    \RR^5 &= \{ G \in \G \ \big\vert \ X_1 + X_2 < 1 \}
\end{align*}
which partition the parameter space, as shown in Figure \ref{fig:regions}; these partitions determine whether $\C_{1 \leq 2}^2$, $\C_{1 \leq 2}^3$, $\C_{1 > 2}^2$, and $\C_{1 > 2}^3$ are nonempty. To expedite the forthcoming discussion, we also introduce Table \ref{table:payoffs}, which summarizes the Players' payoffs in each possible Case; here, we index the Players using $i$ and $-i$, where $-i = 2$ if $i = 1$ and $-i = 1$ if $i = 2$, which allows us to discuss and exploit the symmetry between the sets $G_{1 \leq 2}$ and $\G_{1 > 2} \triangleq \G \setminus \G_{1 \leq 2}$.

\begin{figure}
    \centering
    \includegraphics[width=\linewidth]{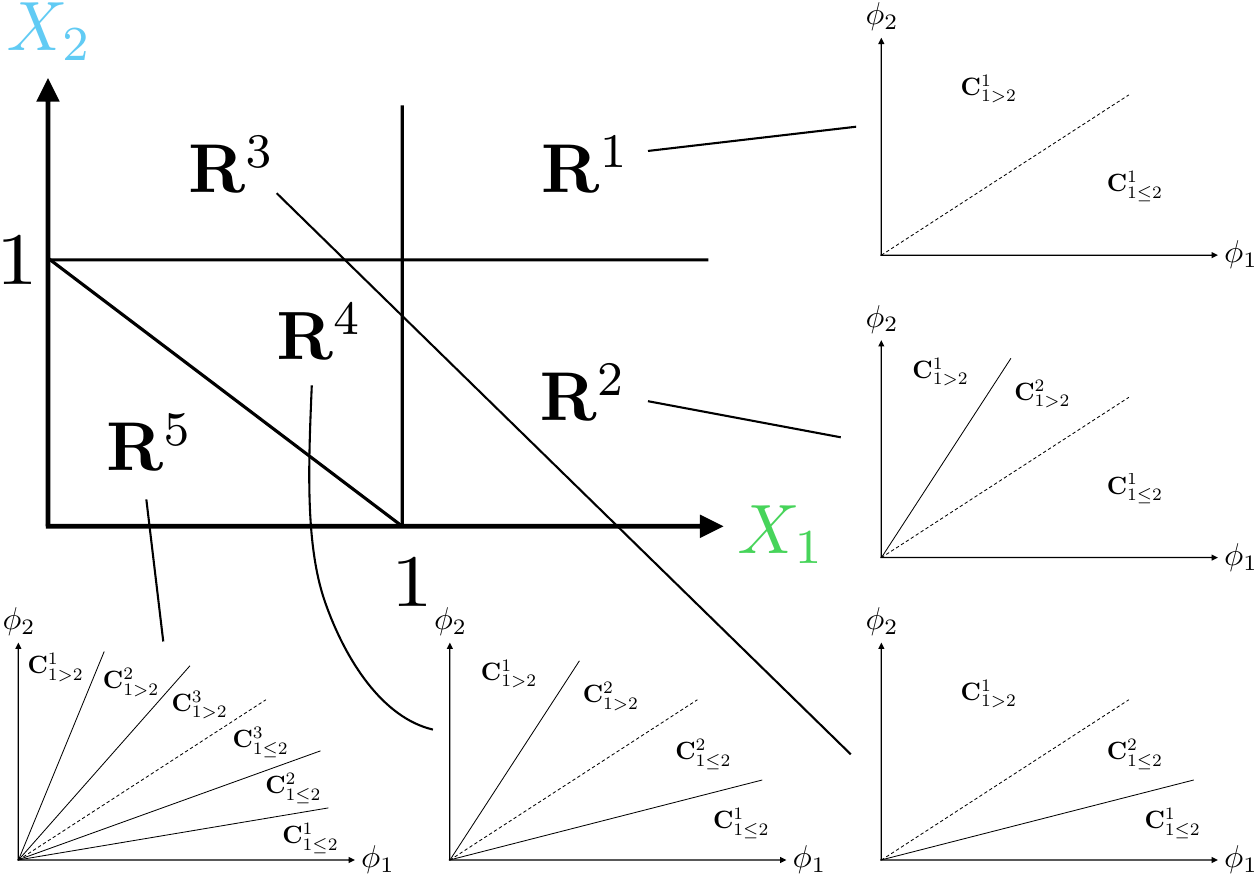}
    \caption{Cartoon depiction subsets of the $X_1$-$X_2$ space (top left, large) that determine whether certain Cases are empty, as depicted each of the five smaller plots in the $\phi_1$-$\phi_2$ space. For example, in subset $\RR^3$, the subsets $\C_{1 \leq 2}^2$, $\C_{1 \leq 2}^3$, and $\C_{1 > 2}^3$ are empty.}
    \label{fig:regions}
\end{figure}

\newcounter{peq}
\newcounter{peqgroup}
\newcounter{peqsub}

\renewcommand{\thepeq}{P\thepeqgroup\alph{peqsub}}

\newcommand{\newpeqgroup}{%
  \stepcounter{peqgroup}%
  \setcounter{peqsub}{0}%
}

\newcommand{\peqlabel}[1]{%
  \stepcounter{peqsub}%
  \refstepcounter{peq}
  \label{#1}%
  (\thepeq)%
}

\begin{table*}[t]
    \centering
    \begin{tabular}{c | c c | c c}
        Case & \multicolumn{2}{c|}{Player $i$ payoff, $U_i(0, \nu \suchthat G)$} & \multicolumn{2}{c}{Player $-i$ payoff, $U_{-i}(0, \nu \suchthat G)$} \\
        \hline \newpeqgroup
        %
        %
        $\OG \in \C_{i < -i}^1$ &
        $\begin{cases} \ophi_i \left( \frac{X_i}{2} \right) & X_i < 1 \\ \ophi_i \left( 1 - \frac{1}{2 X_i} \right) & X_i \geq 1 \end{cases}$ & \peqlabel{eq:p1a} &
        $\ophi_{-i}$ & \peqlabel{eq:p1b} \\ \newpeqgroup
        %
        %
        $\OG \in \C_{i < -i}^2$ &
        $\frac{1}{2} \mysqrt{\frac{X_i \ophi_i \ophi_{-i}}{X_{-i}}}$ & \peqlabel{eq:p2a} &
        $\ophi_{-i} \left( 1 - \frac{1}{2 X_{-i}} \right) + \frac{1}{2} \mysqrt{\frac{X_i \ophi_i \ophi_{-i}}{X_{-i}}}$ & \peqlabel{eq:p2b} \\ \newpeqgroup
        %
        %
        $\OG \in \C_{i < -i}^3$ & $\frac{1}{2} X_i \ophi_i + \mysqrt{X_i X_{-i} \ophi_i \ophi_{-i}}$ & \peqlabel{eq:p3a} & $\frac{1}{2} X_{-i} \ophi_{-i} + \mysqrt{X_i X_{-i} \ophi_i \ophi_{-i}}$  & \peqlabel{eq:p3b} \\ \newpeqgroup
        %
        %
        $\OG \in \C^4$ & $\ophi_i \lpar{1 - \frac{\xai{i}}{X_i}}$, $\xai{i} \leq X_i$ & \peqlabel{eq:p4a} & $\ophi_{-i} \lpar{1 - \frac{\xai{-i}}{X_{-i}}}$, $\xai{-i} \leq X_{-i}$  & \peqlabel{eq:p4b}
    \end{tabular}
    \medskip
    \caption{Players' payoffs for different cases. The dependence on $\nu$ is suppressed for brevity.}
    \label{table:payoffs}
\end{table*}

\subsubsection{Strategically consistent transfers}

We begin by characterizing when strategically consistent mutually beneficial contest transfers exist. First, consider any game $G \in \C_{1 \leq 2}^1$, and suppose that the Players performed a strategically consistent contest transfer, i.e., a transfer $\nu$ such that $\OG \in \C_{1 \leq 2}^1$. In this case, Player 1's payoff \eqref{eq:p1a} decreases for any $\nu > 0$, so there is no strategically consistent mutually beneficial contest transfer.

Next, consider any game $G \in \C_{1 \leq 2}^2$, and suppose that the Players performed a strategically consistent contest transfer $\nu$. In this case, Player 1 and 2's payoffs are given by \eqref{eq:p2a} and \eqref{eq:p2b}, respectively, and their derivatives are given by
\begin{align*}
    \dnu{U_1} &= \frac{1}{4} \lpar{\frac{X_1}{X_2 (\pq - \nu)(\pw + \nu)}}^{-\frac{1}{2}} (\pq - \pw - 2 \nu), \\
    \dnu{U_2} &= \left( 1 - \frac{1}{2 X_2} \right) + \dnu{U_1}.
\end{align*}
It is easily verified that $\dnu{U_1}|_{\nu = 0} > 0$ if and only if
\begin{gather}
    \pq > \pw, \label{eq:sc_c2_1}
    \intertext{and similarly that $\dnu{U_2}|_{\nu = 0} > 0$ if and only if}
    \frac{2 - 4 X_2}{\pq - \pw} < \mysqrt{\frac{X_1 X_2}{\pq \pw}}. \label{eq:sc_c2_2}
\end{gather}

Next, consider any game $G \in \C_{1 \leq 2}^3$, and suppose that the Players performed a strategically consistent contest transfer $\nu$. In this case, Player 1 and 2's payoffs are given by \eqref{eq:p3a} and \eqref{eq:p3b}, respectively, and their derivatives are given by
\begin{align*}
    \dnu{U_1} &= -\frac{1}{2} X_1 + \mysqrt{\frac{X_1 X_2}{(\pq - \nu) (\pw + \nu)}} (\pq - \pw - 2 \nu), \\
    \dnu{U_2} &= \frac{1}{2} X_2 + \mysqrt{\frac{X_1 X_2}{(\pq - \nu) (\pw + \nu)}} (\pq - \pw - 2 \nu).
\end{align*}
Since $\dnu{U_2} > \dnu{U_1}$, a sufficient condition for the existence of a positive mutually beneficial contest transfer is $\dnu{U_1} \vert_{\nu = 0} > 0$, which can be rewritten as
\begin{equation}\label{eq:sc_c3}
    X_2 > \frac{4 \pq \pw}{(\pq - \pw)^2} X_1
\end{equation}
with $\pq > \pw$ necessarily.

Finally, for any game $G \in \C^4$, the sum of the Players' payoffs is already maximized, so no transfer can strictly increase both Players' payoffs. Thus, we define
\begin{equation}
\begin{split}
    \G_{1 \leq 2}^{\rm SC} = &\lcurl{G \in \C_{1 \leq 2}^2 \ \big\vert \ \text{$G$ satisfies \eqref{eq:sc_c2_1} and \eqref{eq:sc_c2_2}}} \\
    \cup \, &\lcurl{G \in \C_{1 \leq 2}^3 \ \big\vert \ \text{$G$ satisfies \eqref{eq:sc_c3}}}
\end{split}
\end{equation}
as the set of game instances in $\G_{1 \leq 2}$ in which Players can perform strategically consistent mutually beneficial contest transfers.

\subsubsection{Strategically inconsistent transfers}

In this section, we consider strategically inconsistent transfers, i.e., transfers that cause the form of the Adversary's best response to change. Depending on the Players' initial budgets, a transfer can induce different forms of best responses from the Adversary. Figure \ref{fig:regions} describes the different Cases that can be reached via a contest transfer depending on the Players' initial budgets. We proceed by analyzing each possible transfer in each of the five Regions. We present a thorough analysis for a few representative kinds of transfers; the remaining cases are similar, so we omit several steps for brevity.

\noindent \textbf{Region 1} $(\mathbf{X_1 \geq 1, \, X_2 \geq 1})$: There is one possible form of strategically inconsistent transfer in this Region. Suppose that the Players performed a transfer $\nu$ such that $\OG \in \C_{1 > 2}^1$. In order for this to occur, the transfer must be sufficiently large so that the Adversary allocates the entirety of their budget towards the game against Player 2, meaning that $\nu$ must satisfy
\begin{equation}\label{eq:cross_middle_line}
    \frac{X_1}{\pq - \nu} > \frac{X_2}{\pw + \nu} \iff \nu > \frac{X_2 \pq - X_1 \pw}{X_1 + X_2} \triangleq \alpha_1.
\end{equation}
For the transfer to be mutually beneficial, \eqref{eq:mut_ben_transfer} must be satisfied. The pre-transfer payoffs $U_1(\tau, 0 \suchthat G)$ and $U_2(\tau, 0 \suchthat G)$ are given by \eqref{eq:p1a} and \eqref{eq:p1b} with $i = 1$ and $\nu = 0$, and the post-transfer payoffs are given by \eqref{eq:p1b} and \eqref{eq:p1a} with $i = 2$. Substituting these expressions into \eqref{eq:mut_ben_transfer}, we have
\begin{align}
    \pq \lpar{\frac{X_1}{2}} &< \pq - \nu \label{eq:r1_c1a_c1b_1} \\
    \text{and} \; \pw &< (\pw + \nu) \left( \frac{X_2}{2} \right). \label{eq:r1_c1a_c1b_2}
\end{align}
With some minor algebraic manipulation, it can be shown that there exists a transfer satisfying \eqref{eq:cross_middle_line}, \eqref{eq:r1_c1a_c1b_1}, and \eqref{eq:r1_c1a_c1b_2} if and only if
\begin{equation}\label{eq:r1_c1a_c1b_cond}
    \frac{2 X_1 X_2 - X_1 - X_2}{2 X_1^2} < \frac{\pw}{\pq} < \frac{2 X_2 - 1}{2 X_1}.
\end{equation}
Thus, we define
\begin{equation}
    \G_{1 \leq 2}^{\rm SI, 1} = \lcurl{G \in \C_{1 \leq 2}^1 \, \cap \, \RR^1 \ \big\vert \ \text{$G$ satisfies \eqref{eq:r1_c1a_c1b_cond}}}
\end{equation}
as the set of game instances in Region 1 in which Players can perform positive, strategically inconsistent, and mutually beneficial contest transfers.

\newcounter{regsec}
\setcounter{regsec}{1}
\newcounter{formsec}
\newcounter{formref} 

\renewcommand{\theformref}{\theregsec.\theformsec}

\newcommand{\mylabel}{\textbf{\theformref}\ }

\newcommand{\newregsec}{%
  \stepcounter{regsec}%
  \setcounter{formsec}{0}%
}

\newcommand{\myformsec}[1]{%
  \par\noindent
  \stepcounter{formsec}%
  \refstepcounter{formref}%
  \label{#1}%
  \mylabel
}

\newregsec
\noindent \textbf{Region 2} $(\mathbf{X_1 \geq 1, \, X_2 < 1})$: There are two possible forms of strategically inconsistent transfers in this Region. 

\myformsec{formsec:r2_1a_2b} $\lpar{\C_{1 \leq 2}^1 \to \C_{1 > 2}^2}$: In order for $\OG \in \C_{1 > 2}^2$, $\nu$ must satisfy \eqref{eq:cross_middle_line} and
\begin{equation}\label{eq:stay_in_2b}
    \frac{\pq - \nu}{\pw + \nu} > X_1 X_2 \iff \nu < \frac{\pq - X_1 X_2 \pw}{X_1 X_2 + 1} \triangleq \alpha_2.
\end{equation}
Player 1 and 2's pre-transfer payoffs are given by \eqref{eq:p1a} and \eqref{eq:p1b} with $i = 1$ and $\nu = 0$, and their post-transfer payoffs are given by \eqref{eq:p2b} and \eqref{eq:p2a} with $i = 2$, respectively. With these payoffs, \eqref{eq:mut_ben_transfer} can be written as two quadratic inequalities in $\nu$ of the form
\begin{equation}\label{eq:quadratic_inequality}
    a_i \nu^2 + b_i \nu + c_i < 0
\end{equation}
with
\begin{gather}
    a_1 = 1 + \frac{(2 X_1 - 1)^2}{X_1 X_2}, \ b_1 = \pw - \pq, \ c_1 = -\pq \pw \label{eq:r2_c1a_c2b_1} \\
    \text{and } a_2 = 1, \ b_2 = \pw - \pq, \ c_2 = 4 \frac{X_1}{X_2} \pw^2 - 4 \pq \pw. \label{eq:r2_c1a_c2b_2}
\end{gather}
Let $d_1$ and $d_2$ denote the discriminants of the quadratics defined by \eqref{eq:r2_c1a_c2b_1} and \eqref{eq:r2_c1a_c2b_2}, respectively, where $$d_i  \triangleq b_i^2 - 4 a_i c_i.$$ Similarly, let $z_i^-$ and $z_i^+$ denote the solutions of the quadratics defined by $a_i$, $b_i$, and $c_i$, i.e., $$z_i^{\pm} \triangleq \frac{-b \pm \sqrt{d_i}}{2 a_i}.$$ An inequality of the form \eqref{eq:quadratic_inequality} is satisfied if and only if $d_i > 0$ and $z_i^- < \nu < z_i^+$. Thus, there exists a transfer satisfying \eqref{eq:mut_ben_transfer}, \eqref{eq:cross_middle_line}, and \eqref{eq:stay_in_2b} if and only if
\begin{equation}\label{eq:r2_c1a_c2b_cond}
  \begin{gathered}
    d_1 > 0, \, d_2 > 0,  \\
    \text{and} \, \max \lcurl{z_1^-, z_2^-, \alpha_1} < \min \lcurl{z_1^+, z_2^+, \alpha_2}.
  \end{gathered}
\end{equation}

\myformsec{formsec:r2_1a_1b} $\lpar{\C_{1 \leq 2}^1 \to \C_{1 > 2}^1}$: In order for $\OG \in \C_{1 > 2}^1$, $\nu$ must satisfy
\begin{equation}\label{eq:go_to_1b}
    \frac{\pq - \nu}{\pw + \nu} \leq X_1 X_2 \iff \nu \geq \alpha_2.
\end{equation}
Player 1 and 2's pre-transfer payoffs are given by \eqref{eq:p1a} and \eqref{eq:p1b} with $i = 1$ and $\nu = 0$, and their post-transfer payoffs are given by \eqref{eq:p1b} and \eqref{eq:p1a} with $i = 2$, respectively. With these payoffs, it can be shown that there exists a transfer satisfying \eqref{eq:mut_ben_transfer} and \eqref{eq:go_to_1b} if and only if
\begin{equation}\label{eq:r2_c1a_c1b_cond}
    \frac{-X_1 X_2 + 2X_1 - 1}{2 X_1^2 X_2} < \frac{\pw}{\pq} < \frac{X_2}{2 X_1 (2 - X_2)}.
\end{equation}

Thus, we define
\begin{equation}
    \G_{1 \leq 2}^{\rm SI, 2} = \lcurl{G \in \C_{1 \leq 2}^1 \, \cap \, \RR^2 \ \big\vert \ \text{$G$ satisfies \eqref{eq:r2_c1a_c1b_cond} or \eqref{eq:r2_c1a_c2b_cond}}}
\end{equation}
as the set of game instances in Region 2 in which Players can perform positive, strategically inconsistent, and mutually beneficial contest transfers.

For the remaining Regions, the procedure is similar: one must establish conditions on the transfer so that the induced game $\OG$ belongs to the appropriate case, and one must also establish conditions so that \eqref{eq:mut_ben_transfer} is satisfied. Henceforth, we omit much of the discussion and simply state the conditions, which follow from the same steps as shown above.

\newregsec
\noindent \textbf{Region 3} $(\mathbf{X_1 < 1, \, X_2 \geq 1})$: There are three possible forms of strategically inconsistent transfers in this Region. 

\myformsec{formsec:r3_1a_2a} $\lpar{\C_{1 \leq 2}^1 \to \C_{1 \leq 2}^2}$: In order for $\OG \in \C_{1 \leq 2}^2$, $\nu$ must satisfy
\begin{gather}
    \frac{\pw + \nu}{\pq - \nu} > X_1 X_2 \iff \nu > \frac{X_1 X_2 \pq - \pw}{X_1 X_2 + 1} \triangleq \alpha_3 \label{eq:1a_to_2a} \\
    \text{and} \quad
    \frac{X_1}{\pq - \nu} > \frac{X_2}{\pw + \nu} \iff \nu < \alpha_1. \label{eq:no_cross_middle_line}
\end{gather}
The pre-transfer payoffs $U_1$ and $U_2$ are given by \eqref{eq:p1a} and \eqref{eq:p1b} with $i = 1$ and $\nu = 0$, and the post-transfer payoffs are given by \eqref{eq:p2a} and \eqref{eq:p2b} with $i = 1$. The first condition in \eqref{eq:mut_ben_transfer}, $U_1(\tau, 0 \suchthat G) < U_1(\tau, \nu \suchthat G)$, can be written in the form \eqref{eq:quadratic_inequality} with
\begin{equation}\label{eq:r3_1a_1b_u1_cond}
    a_3 = 1, \  b_3 = \pw - \pq, \ c_3 = X_1 X_2 \pq^2 - \pq \pw,
\end{equation}
and it is easily verified that for any positive transfer $\nu$, $U_1(\tau, 0 \suchthat G) < U_1(\tau, \nu \suchthat G) \implies U_2(\tau, 0 \suchthat G) < U_2(\tau, \nu \suchthat G)$. Thus, there exists a positive transfer satisfying \eqref{eq:mut_ben_transfer}, \eqref{eq:1a_to_2a} and \eqref{eq:no_cross_middle_line} if and only if
\begin{equation}\label{eq:r3_c1a_c2a_cond}
    d_3 > 0 \; \text{and} \, \max \lcurl{z_3^-, \alpha_3} < \min \lcurl{z_3^+, \alpha_1}.
\end{equation}

\myformsec{formsec:r3_1a_1b} $\lpar{\C_{1 \leq 2}^1 \to \C_{1 > 2}^1}$: In order for $\OG \in \C_{1 > 2}^1$, $\nu$ must satisfy \eqref{eq:cross_middle_line}. The pre-transfer payoffs $U_1$ and $U_2$ are given by \eqref{eq:p1a} and \eqref{eq:p1b} with $i = 1$ and $\nu = 0$, and the post-transfer payoffs are given by \eqref{eq:p1b} and \eqref{eq:p1a} with $i = 2$. There exists a transfer satisfying \eqref{eq:mut_ben_transfer} and \eqref{eq:cross_middle_line} if and only if
\begin{equation}\label{eq:r3_c1a_c1b_cond}
    \frac{1}{2} (X_1 + X_2 - 2) < \frac{\pw}{\pq} < \frac{1}{2} (-X_1 + 2)(2 X_2 - 1).
\end{equation}

\myformsec{formsec:r3_2a_1b} $\lpar{\C_{1 \leq 2}^2 \to \C_{1 > 2}^1}$: In order for $\OG \in \C_{1 > 2}^1$, $\nu$ must satisfy \eqref{eq:cross_middle_line}. The pre-transfer payoffs $U_1$ and $U_2$ are given by \eqref{eq:p2a} and \eqref{eq:p2b} with $i = 1$ and $\nu = 0$, and the post-transfer payoffs are given by \eqref{eq:p1b} and \eqref{eq:p1a} with $i = 2$. There exists a transfer satisfying \eqref{eq:mut_ben_transfer} and \eqref{eq:cross_middle_line} if and only if
\begin{equation}\label{eq:r3_c2a_c1b_cond}
    \max \lcurl{\alpha_1, \frac{\mysqrt{X_1 X_2 \pq \pw}}{2 X_2 - 1} } < \pq - \frac{1}{2} \mysqrt{\frac{X_1 \pw \pw}{X_2}}.
\end{equation}
Thus, we define
\begin{equation}
\begin{split}
    \G_{1 \leq 2}^{\rm SI, 3} = &\lcurl{G \in \C_{1 \leq 2}^1 \, \cap \, \RR^3 \ \big\vert \ \text{$G$ satisfies \eqref{eq:r3_c1a_c2a_cond} or \eqref{eq:r3_c1a_c1b_cond}}} \\
    \cup \, &\lcurl{G \in \C_{1 \leq 2}^2 \, \cap \, \RR^3 \ \big\vert \ \text{$G$ satisfies \eqref{eq:r3_c2a_c1b_cond}}}
\end{split}
\end{equation}
as the set of game instances in Region 3 in which Players can perform positive, strategically inconsistent, and mutually beneficial contest transfers.

\newregsec
\noindent \textbf{Region 4} $(\mathbf{X_1 < 1, \, X_2 \geq 1})$: There are five possible forms of strategically inconsistent transfers in this Region.

\myformsec{formsec:r4_1a_2a} $\lpar{\C_{1 \leq 2}^1 \to \C_{1 \leq 2}^2}$: In order for $\OG \in \C_{1 \leq 2}^2$, $\nu$ must satisfy \eqref{eq:1a_to_2a} and \eqref{eq:no_cross_middle_line}. The pre-transfer payoffs $U_1$ and $U_2$ are given by \eqref{eq:p1a} and \eqref{eq:p1b} with $i = 1$ and $\nu = 0$, and the post-transfer payoffs are given by \eqref{eq:p2a} and \eqref{eq:p2b} with $i = 1$. The condition $U_1(\tau, 0 \suchthat G) < U_1(\tau, \nu \suchthat G)$ can again be written in the form \eqref{eq:quadratic_inequality} with parameters given by \eqref{eq:r3_1a_1b_u1_cond}, and it is readily verified that $U_1(\tau, 0 \suchthat G) < U_1(\tau, \nu \suchthat G) \implies U_2(\tau, 0 \suchthat G) < U_2(\tau, \nu \suchthat G)$ when $X_2 \geq \frac{1}{2}$. When $X_2 < \frac{1}{2}$, the condition $U_2(\tau, 0 \suchthat G) < U_2(\tau, \nu \suchthat G)$ can be written in the form \eqref{eq:quadratic_inequality} with
\begin{gather*}
    a_4 = (1 - 2 X_2)^2 + X_1 X_2, \; c_4 = \pw^2 - X_1 X_2 \pq \pw, \\
    b_4 = 2 (1 - 2 X_2) \pw + X_1 X_2 (\pw - \pq).
\end{gather*}
Thus, there exists a positive transfer satisfying \eqref{eq:mut_ben_transfer}, \eqref{eq:1a_to_2a} and \eqref{eq:no_cross_middle_line} if and only if
\begin{equation}\label{eq:r4_c1a_c2a_cond}
  \begin{gathered}
  d_3 > 0, X_2 \geq \frac{1}{2}, \, \text{and} \, \max \lcurl{z_3^-, \alpha_3} < \min \lcurl{z_3^+, \alpha_1} \\
    \text{or} \; d_3 > 0, d_4 > 0, X_2 < \frac{1}{2}, \\
    \text{and} \, \max \lcurl{z_3^-, z_4^-, \alpha_3} < \min \lcurl{z_3^+, z_4^+, \alpha_1}.
  \end{gathered}
\end{equation}

\myformsec{formsec:r4_1a_2b} $\lpar{\C_{1 \leq 2}^1 \to \C_{1 > 2}^2}$: In order for $\OG \in \C_{1 > 2}^2$, $\nu$ must satisfy \eqref{eq:cross_middle_line} and \eqref{eq:stay_in_2b}. The pre-transfer payoffs $U_1$ and $U_2$ are given by \eqref{eq:p1a} and \eqref{eq:p1b} with $i = 1$ and $\nu = 0$, and the post-transfer payoffs are given by \eqref{eq:p2b} and \eqref{eq:p2a} with $i = 2$. The conditions \eqref{eq:mut_ben_transfer} can be written in the form \eqref{eq:quadratic_inequality} with
\begin{gather*}
    a_5 = (1 - 2 X_1)^2 + X_1 X_2, \; c_5 = (X_1 - 1)^4 \pq^2 - X_1 X_2 \pq \pw, \\
    b_5 = X_1 X_2 (\pw - \pq) - 2 (X_1 - 1)^2 (1 - 2 X_1) \pq \\
    \text{and} \;
    a_6 = 1, \; b_6 = \pw - \pq, \; c_6 = 4 \frac{X_1}{X_2} \pw^2 - \pq \pw
\end{gather*}
for $U_1$ and $U_2$, respectively. Thus, there exists a positive transfer satisfying \eqref{eq:mut_ben_transfer}, \eqref{eq:cross_middle_line}, and \eqref{eq:stay_in_2b} if and only if
\begin{equation}\label{eq:r4_c1a_c2b_cond}
  \begin{gathered}
    d_5 > 0, d_6 > 0, \\
    \text{and} \, \max \lcurl{z_5^-, z_6^-, \alpha_1} < \min \lcurl{z_5^+, z_6^+, \alpha_2}.
  \end{gathered}
\end{equation}

\myformsec{formsec:r4_1a_1b} $\lpar{\C_{1 \leq 2}^1 \to \C_{1 > 2}^1}$: In order for $\OG \in \C_{1 > 2}^1$, $\nu$ must satisfy \eqref{eq:go_to_1b}. The pre-transfer payoffs $U_1$ and $U_2$ are given by \eqref{eq:p1b} and \eqref{eq:p1a} with $i = 2$ and $\nu = 0$, and the post-transfer payoffs are given by \eqref{eq:p1a} and \eqref{eq:p1b} with $i = 1$. There exists a transfer satisfying \eqref{eq:mut_ben_transfer} and \eqref{eq:go_to_1b} if and only if
\begin{equation}\label{eq:r4_c1a_c1b_cond}
    \frac{X_1 X_2 - 2 X_2 + 1}{2 X_2} < \frac{\pw}{\pq} < \frac{(-X_1 + 2)X_2}{2(-X_2 + 2)}.
\end{equation}

\myformsec{formsec:r4_2a_2b} $\lpar{\C_{1 \leq 2}^2 \to \C_{1 > 2}^2}$: In order for $\OG \in \C_{1 > 2}^2$, $\nu$ must satisfy \eqref{eq:cross_middle_line} and \eqref{eq:stay_in_2b}. The pre-transfer payoffs $U_1$ and $U_2$ are given by \eqref{eq:p2a} and \eqref{eq:p2b} with $i = 1$ and $\nu = 0$, and the post-transfer payoffs are given by \eqref{eq:p2b} and \eqref{eq:p2a} with $i = 2$. The conditions \eqref{eq:mut_ben_transfer} can be written in the form \eqref{eq:quadratic_inequality} with
\begin{gather*}
    a_7 = (2 X_1 - 1)^2 + X_1 X_2, \\
    b_7 = 2 (2 X_1 - 1) \lpar{X_1 \mysqrt{\frac{X_1 \pw \pw}{X_2}} - (2 X_1 - 1) \pq} \\ + X_1 X_2(\pw - \pq), \\
    c_7 = \lpar{X_1 \mysqrt{\frac{X_1 \pw \pw}{X_2}} - (2 X_1 - 1) \pq}^2 - X_1 X_2 \pq \pw \\
    \text{and} \;
    a_8 = 1, \; b_8 = \pw - \pq, \\
    c_8 = \frac{X_1}{X_2} \lpar{\lpar{2 - \frac{1}{X_2}} \pw + \mysqrt{\frac{X_1 \pq \pw}{X_2}} }^2 - \pq \pw
\end{gather*}
for $U_1$ and $U_2$, respectively. Thus, there exists a positive transfer satisfying \eqref{eq:mut_ben_transfer}, \eqref{eq:cross_middle_line}, and \eqref{eq:stay_in_2b} if and only if
\begin{equation}\label{eq:r4_c2a_c2b_cond}
  \begin{gathered}
    d_7 > 0, d_8 > 0, \\
    \text{and} \, \max \lcurl{z_7^-, z_8^-, \alpha_1} < \min \lcurl{z_7^+, z_8^+, \alpha_2}.
  \end{gathered}
\end{equation}

\myformsec{formsec:r4_2a_1b} $\lpar{\C_{1 \leq 2}^2 \to \C_{1 > 2}^1}$: In order for $\OG \in \C_{1 > 2}^1$, $\nu$ must satisfy \eqref{eq:go_to_1b}. The pre-transfer payoffs $U_1$ and $U_2$ are given by \eqref{eq:p2a} and \eqref{eq:p2b} with $i = 1$ and $\nu = 0$, and the post-transfer payoffs are given by \eqref{eq:p1b} and \eqref{eq:p1a} with $i = 2$. There exists a transfer satisfying \eqref{eq:mut_ben_transfer} and \eqref{eq:go_to_1b} if and only if
\begin{equation}\label{eq:r4_c2a_c1b_cond}
\begin{gathered}
    \max \lcurl{\alpha_2, -\frac{(X_2 - 1)^2}{X_2^2}\pw + \mysqrt{\frac{X_1 \pq \pw}{X_2^3} }} \\
    < \pq - \frac{1}{2} \mysqrt{\frac{X_1 \pw \pw}{X_2}}.
\end{gathered}
\end{equation}
Thus, we define
\begin{equation}
\begin{split}
    \G_{1 \leq 2}^{\rm SI, 4} = &\lcurl{G \in \C_{1 \leq 2}^1 \, \cap \, \RR^4 \ \big\vert \ \text{$G$ satisfies \eqref{eq:r4_c1a_c2a_cond}, \eqref{eq:r4_c1a_c2b_cond}, or \eqref{eq:r4_c1a_c1b_cond}}} \\
    \cup \, &\lcurl{G \in \C_{1 \leq 2}^2 \, \cap \, \RR^4 \ \big\vert \ \text{$G$ satisfies \eqref{eq:r4_c2a_c2b_cond} or \eqref{eq:r4_c2a_c1b_cond}}}
\end{split}
\end{equation}
as the set of game instances in Region 4 in which Players can perform positive, strategically inconsistent, and mutually beneficial contest transfers.

\newregsec
\noindent \textbf{Region 5} $(\mathbf{X_1 + X_2 < 1})$: There are twelve possible forms of strategically inconsistent transfers in this Region.

\myformsec{formsec:r5_1a_2a} $\lpar{\C_{1 \leq 2}^1 \to \C_{1 \leq 2}^2}$: In order for $\OG \in \C_{1 \leq 2}^2$, $\nu$ must satisfy \eqref{eq:1a_to_2a} and
\begin{gather}
    1 - \mysqrt{\frac{X_1 X_2 (\pq - \nu)}{\pw + \nu}} \leq X_2 \nonumber \\
    \iff \nu \leq \frac{X_1 X_2 \pq - (1 - X_2)^2 \pw}{(1 - X_2)^2 + X_1 X_2} \triangleq \alpha_4. \label{eq:stay_in_2a}
\end{gather}
The analysis is otherwise identical to \textbf{\ref{formsec:r4_1a_2a}}. Thus, there exists a positive transfer satisfying \eqref{eq:mut_ben_transfer}, \eqref{eq:1a_to_2a} and \eqref{eq:stay_in_2a} if and only if
\begin{equation}\label{eq:r5_c1a_c2a_cond}
  \begin{gathered}
  d_3 > 0, X_2 \geq \frac{1}{2}, \, \text{and} \, \max \lcurl{z_3^-, \alpha_3} < \min \lcurl{z_3^+, \alpha_4} \\
    \text{or} \; d_3 > 0, d_4 > 0, X_2 < \frac{1}{2}, \\
    \text{and} \, \max \lcurl{z_3^-, z_4^-, \alpha_3} < \min \lcurl{z_3^+, z_4^+, \alpha_4}.
  \end{gathered}
\end{equation}

\myformsec{formsec:r5_1a_3a} $\lpar{\C_{1 \leq 2}^1 \to \C_{1 \leq 2}^3}$: In order for $\OG \in \C_{1 \leq 2}^3$, $\nu$ must satisfy
\begin{equation}\label{eq:go_to_3}
    1 - \mysqrt{\frac{X_1 X_2 (\pq - \nu)}{\pw + \nu}} > X_2 \iff \nu > \alpha_4
\end{equation}
and \eqref{eq:no_cross_middle_line}. The pre-transfer payoffs $U_1$ and $U_2$ are given by \eqref{eq:p1a} and \eqref{eq:p1b} with $i = 1$ and $\nu = 0$, and the post-transfer payoffs are given by \eqref{eq:p3a} and \eqref{eq:p3b} with $i = 1$. The condition $U_1(\tau, 0 \suchthat G) < U_1(\tau, \nu \suchthat G)$ can be written in the form \eqref{eq:quadratic_inequality} with
\begin{gather*}
    a_9 = 1 + \frac{X_1}{X_2}, \;
    b_9 = \pw - \pq, \;
    c_9 = - \pq \pw.
\end{gather*}
If $\nu \geq \frac{2 - X_2}{X_2} \pw \triangleq \beta_1$, then $U_2(\tau, 0 \suchthat G) < U_2(\tau, \nu \suchthat G)$ is trivially satisfied; otherwise, the condition can be written in the form \eqref{eq:quadratic_inequality} with
\begin{gather*}
    a_{10} = 1 + \frac{X_2}{X_1}, \; 
    b_{10} = \frac{X_1 + 2 X_2 - 4}{X_1} \pw - \pq, \\
    c_{10} = \frac{(2 - X_2)^2 \pw^2}{X_1 X_2} - \pq \pw
\end{gather*}
Thus, there exists a positive transfer satisfying \eqref{eq:mut_ben_transfer}, \eqref{eq:go_to_3}, and \eqref{eq:no_cross_middle_line} if and only if
\begin{equation}\label{eq:r5_c1a_c3a_cond}
  \begin{gathered}
    d_9 > 0 \; \text{and} \, \max \lcurl{z_9^-, \beta_1, \alpha_4} < \min \lcurl{z_9^+, \alpha_1} \\
    \text{or} \; d_9 > 0, \, d_{10} > 0, \\
    \; \text{and} \, \max \lcurl{z_9^-, z_{10}^-, \alpha_4} < \min \lcurl{z_9^+, z_{10}^+, \beta_1, \alpha_1}.
  \end{gathered}
\end{equation}

\myformsec{formsec:r5_1a_3b} $\lpar{\C_{1 \leq 2}^1 \to \C_{1 > 2}^3}$: In order for $\OG \in \C_{1 > 2}^3$, $\nu$ must satisfy \eqref{eq:cross_middle_line} and
\begin{gather}
    1 - \mysqrt{\frac{X_1 X_2 (\pw + \nu)}{\pq - \nu}} > X_1 \nonumber \\
    \iff \nu < \frac{(1 - X_1)^2 \pq - X_1 X_2 \pw}{(1 - X_1)^2 + X_1 X_2} \triangleq \alpha_5. \label{eq:stay_in_3}
\end{gather}
The analysis is otherwise identical to \textbf{\ref{formsec:r5_1a_3a}}. Thus, there exists a positive transfer satisfying \eqref{eq:mut_ben_transfer}, \eqref{eq:cross_middle_line} and \eqref{eq:stay_in_3} if and only if
\begin{equation}\label{eq:r5_c1a_c3b_cond}
  \begin{gathered}
    d_9 > 0 \; \text{and} \, \max \lcurl{z_9^-, \beta_1, \alpha_1} < \min \lcurl{z_9^+, \alpha_5} \\
    \text{or} \; d_9 > 0, \, d_{10} > 0, \\
    \; \text{and} \, \max \lcurl{z_9^-, z_{10}^-, \alpha_1} < \min \lcurl{z_9^+, z_{10}^+, \beta_1, \alpha_5}.
  \end{gathered}
\end{equation}

\myformsec{formsec:r5_1a_2b} $\lpar{\C_{1 \leq 2}^1 \to \C_{1 > 2}^2}$: In order for $\OG \in \C_{1 > 2}^2$, $\nu$ must satisfy
\begin{equation}\label{eq:go_to_2b_2}
    1 - \mysqrt{\frac{X_1 X_2 (\pw + \nu)}{\pq - \nu}} \leq X_1 \iff \nu \geq \alpha_5
\end{equation}
and \eqref{eq:stay_in_2b}. The analysis is otherwise identical to \textbf{\ref{formsec:r4_1a_2b}}. Thus, there exists a positive transfer satisfying \eqref{eq:mut_ben_transfer}, \eqref{eq:go_to_2b_2} and \eqref{eq:stay_in_2b} if and only if
\begin{equation}\label{eq:r5_c1a_c2b_cond}
  \begin{gathered}
    d_5 > 0, d_6 > 0, \\
    \text{and} \, \max \lcurl{z_5^-, z_6^-, \alpha_5} < \min \lcurl{z_5^+, z_6^+, \alpha_2}.
  \end{gathered}
\end{equation}

\myformsec{formsec:r5_1a_1b} $\lpar{\C_{1 \leq 2}^1 \to \C_{1 > 2}^1}$: The analysis is identical to \textbf{\ref{formsec:r4_1a_1b}}.

\myformsec{formsec:r5_2a_3a} $\lpar{\C_{1 \leq 2}^2 \to \C_{1 \leq 2}^3}$: In order for $\OG \in \C_{1 \leq 2}^3$, $\nu$ must satisfy \eqref{eq:go_to_3} and \eqref{eq:no_cross_middle_line}. The pre-transfer payoffs $U_1$ and $U_2$ are given by \eqref{eq:p2a} and \eqref{eq:p2b} with $i = 1$ and $\nu = 0$, and the post-transfer payoffs are given by \eqref{eq:p3a} and \eqref{eq:p3b} with $i = 1$. The condition $U_1(\tau, 0 \suchthat G) < U_1(\tau, \nu \suchthat G)$ can be written in the form \eqref{eq:quadratic_inequality} with
\begin{gather*}
    a_{11} = 1 + \frac{X_2}{X_1}, \; b_{11} = 2 \mysqrt{\frac{\pq \pw}{X_1 X_2}} + \frac{X_2}{X_1} (\pw - \pq) - 2 \pq, \\
    c_{11} = \lpar{\mysqrt{\frac{\pq \pw}{X_1 X_2}} - \pq}^2 - \frac{X_2}{X_1} \pq \pw.
\end{gather*}
If $\nu \geq \mysqrt{\frac{X_1 \pq \pw}{X_2^3}} - \frac{(1 - X_2)^2}{X_2^2} \pw \triangleq \beta_2$, then $U_2(\tau, 0 \suchthat G) < U_2(\tau, \nu \suchthat G)$ is trivially satisfied; otherwise, the condition can be written in the form \eqref{eq:quadratic_inequality} with
\begin{gather*}
    a_{12} = 1 + \frac{X_1}{X_2}, \; b_{12} = - 2 \beta_2 + \frac{X_1}{X_2} (\pw - \pq), \\
    c_{12} = \beta_2^2 - \frac{X_1}{X_2} \pq \pw.
\end{gather*}
Thus, there exists a positive transfer satisfying \eqref{eq:mut_ben_transfer}, \eqref{eq:go_to_3}, and \eqref{eq:no_cross_middle_line} if and only if
\begin{equation}\label{eq:r5_c2a_c3a_cond}
  \begin{gathered}
    d_{11} > 0 \; \text{and} \, \max \lcurl{z_{11}^-, \beta_2, \alpha_4} < \min \lcurl{z_{11}^+, \alpha_1} \\
    \text{or} \; d_{11} > 0, \, d_{12} > 0, \\
    \; \text{and} \, \max \lcurl{z_{11}^-, z_{12}^-, \alpha_4} < \min \lcurl{z_{11}^+, z_{12}^+, \beta_2, \alpha_1}.
  \end{gathered}
\end{equation}

\myformsec{formsec:r5_2a_3b} $\lpar{\C_{1 \leq 2}^1 \to \C_{1 > 2}^3}$: In order for $\OG \in \C_{1 > 2}^3$, $\nu$ must satisfy \eqref{eq:cross_middle_line} and \eqref{eq:stay_in_3}. The analysis is otherwise identical to \textbf{\ref{formsec:r5_2a_3a}}. Thus, there exists a positive transfer satisfying \eqref{eq:mut_ben_transfer}, \eqref{eq:cross_middle_line} and \eqref{eq:stay_in_3} if and only if
\begin{equation}\label{eq:r5_c2a_c3b_cond}
  \begin{gathered}
    d_{11} > 0 \; \text{and} \, \max \lcurl{z_{11}^-, \beta_2, \alpha_1} < \min \lcurl{z_{11}^+, \alpha_5} \\
    \text{or} \; d_{11} > 0, \, d_{12} > 0, \\
    \; \text{and} \, \max \lcurl{z_{11}^-, z_{12}^-, \alpha_1} < \min \lcurl{z_{11}^+, z_{12}^+, \beta_2, \alpha_5}.
  \end{gathered}
\end{equation}

\myformsec{formsec:r5_2a_2b} $\lpar{\C_{1 \leq 2}^2 \to \C_{1 > 2}^2}$: In order for $\OG \in \C_{1 > 2}^2$, $\nu$ must satisfy \eqref{eq:go_to_2b_2} and \eqref{eq:stay_in_2b}. The analysis is otherwise identical to \textbf{\ref{formsec:r4_2a_2b}}. Thus, there exists a positive transfer satisfying \eqref{eq:mut_ben_transfer}, \eqref{eq:go_to_2b_2}, and \eqref{eq:stay_in_2b} if and only if
\begin{equation}\label{eq:r5_c2a_c2b_cond}
  \begin{gathered}
    d_7 > 0, d_8 > 0, \\
    \text{and} \, \max \lcurl{z_7^-, z_8^-, \alpha_5} < \min \lcurl{z_7^+, z_8^+, \alpha_2}.
  \end{gathered}
\end{equation}

\myformsec{formsec:r5_2a_1b} $\lpar{\C_{1 \leq 2}^2 \to \C_{1 > 2}^1}$: The analysis is identical to \textbf{\ref{formsec:r4_2a_1b}}.

\myformsec{formsec:r5_3a_3b} $\lpar{\C_{1 \leq 2}^3 \to \C_{1 > 2}^2}$: In order for $\OG \in \C_{1 > 2}^3$, $\nu$ must satisfy \eqref{eq:cross_middle_line} and \eqref{eq:stay_in_3}. The pre-transfer payoffs $U_1$ and $U_2$ are given by \eqref{eq:p3a} and \eqref{eq:p3b} with $i = 1$ and $\nu = 0$, and the post-transfer payoffs are given by \eqref{eq:p3a} and \eqref{eq:p3b} with $i = 1$. The condition $U_1(\tau, 0 \suchthat G) < U_1(\tau, \nu \suchthat G)$ can be written as
\begin{equation*}
    \nu < \frac{X_1}{X_1 + X_2} \lpar{\pq - \pw - 2 \mysqrt{\frac{X_1 \pq \pw}{X_2}}},
\end{equation*}
but it is readily verified that this condition and \eqref{eq:cross_middle_line} cannot be satisfied simultaneously. Thus, there does not exist any possible mutually beneficial transfer of this form.

\myformsec{formsec:r5_3a_2b} $\lpar{\C_{1 \leq 2}^3 \to \C_{1 > 2}^2}$: In order for $\OG \in \C_{1 > 2}^2$, $\nu$ must satisfy \eqref{eq:go_to_2b_2} and \eqref{eq:stay_in_2b}. The pre-transfer payoffs $U_1$ and $U_2$ are given by \eqref{eq:p3a} and \eqref{eq:p3b} with $i = 1$ and $\nu = 0$, and the post-transfer payoffs are given by \eqref{eq:p2b} and \eqref{eq:p2a} with $i = 2$. The conditions \eqref{eq:mut_ben_transfer} can be written in the form \eqref{eq:quadratic_inequality} with
\begin{gather*}
    a_{13} = 1 + \frac{X_1}{X_2} \lpar{2 - \frac{1}{X_1}}^2, \\
    b_{13} = \frac{4 X_1 - 2}{X_2} \lpar{\frac{(X_1 - 1)^2}{X_1} \pq + \mysqrt{\pq \pw X_1 X_2}} \\ 
    + \pw - \pq, \\
    c_{13} = \frac{X_1}{X_2} \lpar{\frac{(X_1 - 1)^2}{X_1} \pq + \mysqrt{\pq \pw X_1 X_2}}^2 - \pq \pw \\
    \text{and} \;
    a_{14} = 1, \; b_{14} = \pw - \pq, \\
    c_{14} = \frac{X_1}{X_2} \lpar{X_2 \pw + \mysqrt{X_1 X_2 \pq \pw} - \pq \pw}
\end{gather*}
for $U_1$ and $U_2$, respectively. Thus, there exists a positive transfer satisfying \eqref{eq:mut_ben_transfer}, \eqref{eq:go_to_2b_2}, and \eqref{eq:stay_in_2b} if and only if
\begin{equation}\label{eq:r5_c3a_c2b_cond}
  \begin{gathered}
    d_{13} > 0, d_{14} > 0, \\
    \text{and} \, \max \lcurl{z_{13}^-, z_{14}^-, \alpha_5} < \min \lcurl{z_{13}^+, z_{14}^+, \alpha_2}.
  \end{gathered}
\end{equation}

\myformsec{formsec:r5_3a_1b} $\lpar{\C_{1 \leq 2}^3 \to \C_{1 > 2}^1}$: In order for $\OG \in \C_{1 > 2}^1$, $\nu$ must satisfy \eqref{eq:go_to_1b}. The pre-transfer payoffs $U_1$ and $U_2$ are given by \eqref{eq:p3a} and \eqref{eq:p3b} with $i = 1$ and $\nu = 0$, and the post-transfer payoffs are given by \eqref{eq:p1b} and \eqref{eq:p1a} with $i = 2$. The conditions \eqref{eq:mut_ben_transfer} can be written as
\begin{equation*}
     \mysqrt{\frac{X_1 \pq \pw}{X_2}} < \pq \lpar{1 - \frac{X_1}{2}} - \frac{1}{2} \mysqrt{X_1 X_2 \pq \pw}.
\end{equation*}
Thus, there exists a positive transfer satisfying \eqref{eq:mut_ben_transfer} and \eqref{eq:go_to_1b} if and only if
\begin{equation}\label{eq:r5_c3a_c1b_cond}
  \begin{gathered}
    \max \lcurl{\mysqrt{\frac{X_1 \pq \pw}{X_2}}, \alpha_2} \\ < \pq \lpar{1 - \frac{X_1}{2}} - \frac{1}{2} \mysqrt{X_1 X_2 \pq \pw}.
  \end{gathered}
\end{equation}

Thus, we define
\begin{equation}
\begin{split}
    \G_{1 \leq 2}^{\rm SI, 5} = \big\{G \in \C_{1 \leq 2}^1 \, \cap \, \RR^5 \ \big\vert \ &\text{$G$ satisfies \eqref{eq:r5_c1a_c2a_cond}, \eqref{eq:r5_c1a_c3a_cond},}\\
    &\text{
    \eqref{eq:r5_c1a_c3b_cond}, 
    \eqref{eq:r5_c1a_c2b_cond}, or 
    \eqref{eq:r4_c1a_c1b_cond}}
    \big\} \\
    \cup \, \big\{G \in \C_{1 \leq 2}^2 \, \cap \, \RR^5 \ \big\vert \ &\text{$G$ satisfies \eqref{eq:r5_c2a_c3a_cond},
    \eqref{eq:r5_c2a_c3b_cond},} \\
    &\text{
    \eqref{eq:r5_c2a_c2b_cond}, or 
    \eqref{eq:r4_c2a_c1b_cond}}
    \big\} \\
    \cup \, \big\{G \in \C_{1 \leq 2}^2 \, \cap \, \RR^5 \ \big\vert \ &\text{$G$ satisfies \eqref{eq:r5_c3a_c2b_cond} or 
    \eqref{eq:r5_c3a_c1b_cond}}
    \big\}
\end{split}
\end{equation}
as the set of games in Region 5 in which Players can perform positive, strategically inconsistent, and mutually beneficial contest transfers. Collecting these subsets, we define
\begin{equation}
    \G_{1 \leq 2}^{\rm SI} = \G_{1 \leq 2}^{\rm SI, 1} \, \cup \, \G_{1 \leq 2}^{\rm SI, 2} \, \cup \, \G_{1 \leq 2}^{\rm SI, 3} \, \cup \, \G_{1 \leq 2}^{\rm SI, 4} \, \cup \, \G_{1 \leq 2}^{\rm SI, 5}
\end{equation}
as the set of game instances in which Players can perform positive, strategically inconsistent, and mutually beneficial contest transfers. We define
\begin{equation}
    \G_{1 \leq 2}^\nu = \G_{1 \leq 2}^{\rm SC} \, \cup \, \G_{1 \leq 2}^{\rm SI}
\end{equation}
as the set of games in $\G_{1 \leq 2}$ in which Players can perform positive, mutually beneficial contest transfers, and we define
\begin{equation}\label{eq:region_mut_ben_exist}
    \G^\nu = \G_{1 \leq 2}' \, \cup \, \G_{1 > 2}^\nu
\end{equation}
as the set of games where Players can perform mutually beneficial contest transfers, where $\G_{1 > 2}^\nu \subset (\G \setminus \G_{1 \leq 2})$ is defined similarly with the appropriate swapping of indices.

\subsubsection{Establishing the statements}

Statement a) follows from the characterization of $\G^\nu$ above, and from \cite{kovenock2012coalitional} and \cite{shah2024battlefield}. Statement b) follows by means of an example; it is readily verified that the game $G = (12, 10, 0.4, 1.6)$ belongs to both $\G^\tau$ and $\G^\nu$. Statement c) can similarly be verified by the construction of two examples. Statement d) is proven in \cite{shah2024battlefield}.

\subsection{Proof of Theorem \ref{thm:col_ben_equiv}}\label{app:equivalence_proof}

The proof requires analyzing the change in the collective payoff $U_{1 + 2}(\tau, \nu) \triangleq U_1(\tau, \nu) + U_2(\tau, \nu)$ as a function of the contest transfer amount $\nu$ for fixed $\tau = 0$ in each of the four Cases; the comparison with budget and joint transfers is shown at the end of the proof. A transfer from Player 1 to Player 2 effectively translates the game $G$ in the parameter space to a new game $\OG$, meaning that the game $\OG$ may belong to a Case different from that of $G$; this is detailed in Figure \ref{fig:regions}.
    
First, consider any game $G \in \C_{1 \leq 2}^1$. In this case, the Adversary allocates all of their budget to General Lotto game 1, meaning that the alliance wins a portion of the contests in Lotto game 1, and they win all the contests in Lotto game 2. Thus, it is clear that the alliance can improve their collective  payoff by transferring contest valuation from Player 1 to Player 2, increasing the value of the contests won in the game 2 (i.e., they win $\pw + \nu$); conversely, it is also clear that the alliance cannot improve their payoff by transferring valuation from Player 2 to Player 1. The alliance can improve their payoff by transferring contests from Player 1 to Player 2 until either $\OG \in \C_{1 \leq 2}^2$, which we address next, or $\OG \in \C^4$.

Second, consider any game $G \in \C_{1 \leq 2}^2$. In this case, the alliance payoff $ U_{1 + 2}(0, \nu \suchthat G)$ is given by
\begin{equation*}
    \lpar{1 - \frac{1}{2 X_2}} (\pw + \nu) + \mysqrt{\frac{X_1 (\pq - \nu) (\pw + \nu)}{X_2}},
\end{equation*}
and its derivative $\dnu{U_{1 + 2}}$ is given by
\begin{equation*}
    1 - \frac{1}{2 X_2} + \frac{1}{2} \mysqrt{\frac{X_1}{X_2 (\pq - \nu) (\pw + \nu)}} (\pq - \pw - 2 \nu).
\end{equation*}
It is readily shown that $\dnu{U_{1 + 2}}|_{\nu = 0} > 0$ when $X_2 > \frac{\pw}{\pq + \pw}$, which is true for all $G \in \C_{1 \leq 2}^2$. Hence, we have that collectively beneficial transfers exist for every game in $\C_{1 \leq 2}^2$. Furthermore, since $\dnu{U_{1 + 2}}|_{\nu = 0} > 0$ and $U_{1 + 2}$ is concave in $\nu$, we also have that collectively beneficial transfers must be positive. The alliance can thus improve its payoff by performing a transfer, which yields $\OG \in \C_{1 \leq 2}^3$ if $X_1 + X_2 < 1$, or $\OG \in \C^4$ otherwise—we address both Cases next.

Third, consider any game $G \in \C_{1 \leq 2}^3$. In this case, the alliance payoff $U_{1 + 2}(0, \nu \suchthat G)$ is given by
\begin{equation*}
    \frac{1}{2} X_1 (\pq - \nu) + \frac{1}{2} X_2 (\pw + \nu) + \mysqrt{X_1 X_2 (\pq  - \nu)(\pw + \nu)}
\end{equation*}
and its derivative $\dnu{U_{1 + 2}}$ is given by
\begin{equation*}
    \frac{1}{2} (X_2 - X_1) + \frac{1}{2} \mysqrt{\frac{X_1 X_2}{(\pq - \nu) (\pw + \nu)}} (\pq - \pw - 2 \nu).
\end{equation*}
It is straightforward to show that $\dnu{U_{1 + 2}}|_{\nu = 0}$ is nonnegative for all games in $\C_{1 \leq 2}^3$. Furthermore, simple algebraic substitution shows that the unique transfer $\nu^{**}$ that solves $\dnu{U_{1 + 2}} = 0$ satisfies $\frac{X_1}{\pq - \nu^{**}} = \frac{X_2}{\pw + \nu^{**}}$. Thus, collectively beneficial transfers exist for every game in $\C_{1 \leq 2}^3$, and they must go from Player 1 to Player 2.

Last, consider any game $G \in \C^4$. The previous arguments concerning Cases 1 and 2 establish that the alliance payoff increases only via positive transfers approaching Case 4. Furthermore, these arguments are symmetric for games $G \in \G \setminus \G_{1 \leq 2}$. This, along with the fact that $U_{1 + 2}$ is continuous in $\nu$, implies that the alliance payoff is maximized when $G \in \C^4$. By extension, for any game $G \in \C_{1 \leq 2}^1 \, \cup \, \C_{1 \leq 2}^2$, the transfer $\nu^{**}$ that maximizes the alliance payoff must be positive and must induce $\OG \in \C^4$, meaning that $\frac{X_1}{\pq - \nu^{**}} = \frac{X_2}{\pw + \nu^{**}}$.

The previous arguments establish that the collectively beneficial transfer that maximizes the collective payoff is such that $\frac{X_1}{\ophione} = \frac{X_2}{\ophitwo}$. To complete the proof, we must simply compare the payoff resulting from maximizing budget, contest, and joint transfers. First, observe that if $X_1 + X_2 \geq 1$, then from the above analysis and from \cite{kovenock2012coalitional}, the maximizing budget or contest transfer is such that $\overline{G} \in \C^4$, meaning that the collective payoff is
\begin{equation*}
    \frac{2(\overline{X}_1 + \overline{X_2}) - 1}{2(\overline{X}_1 + \overline{X_2})} \lpar{\ophione + \ophitwo} = \frac{2(X_1 + X_2) - 1}{2(X_1 + X_2)} \lpar{\pq + \pw},
\end{equation*}
which is a fixed constant regardless of the transfer type. Otherwise, if $X_1 + X_2 < 1$, then $\OG \in \C_3$, and the collective payoff is
\begin{gather*}
    \frac{1}{2} \lpar{\overline{X}_1 \ophione + \overline{X}_2 \ophitwo} + \mysqrt{\overline{X}_1 \overline{X}_2 \ophione \ophitwo}.
\end{gather*}
When $\frac{\overline{X}_1}{\ophione} = \frac{\overline{X}_2}{\ophitwo}$ (which is true for any maximizing contest or budget transfer \cite{kovenock2012coalitional}), this expression can be rewritten as $\frac{1}{2} \lpar{\pq + \pw} \lpar{X_1 + X_2}$, which is also a constant.

It is readily verified that this reasoning extends to joint transfers, since joint transfers are a pair of contest and budget transfers; maximizing joint transfers must also equate the Players' relative strengths, so the maximizing payoffs are also identical. Finally, for games $\G \setminus \G_{1 \leq 2}$, the same arguments can be carried out with only a swapping of indices.

\section*{References}
\bibliographystyle{ieeetr}
\bibliography{references}

\begin{IEEEbiography}[{\includegraphics[width=1in,height=1.25in,clip,keepaspectratio]{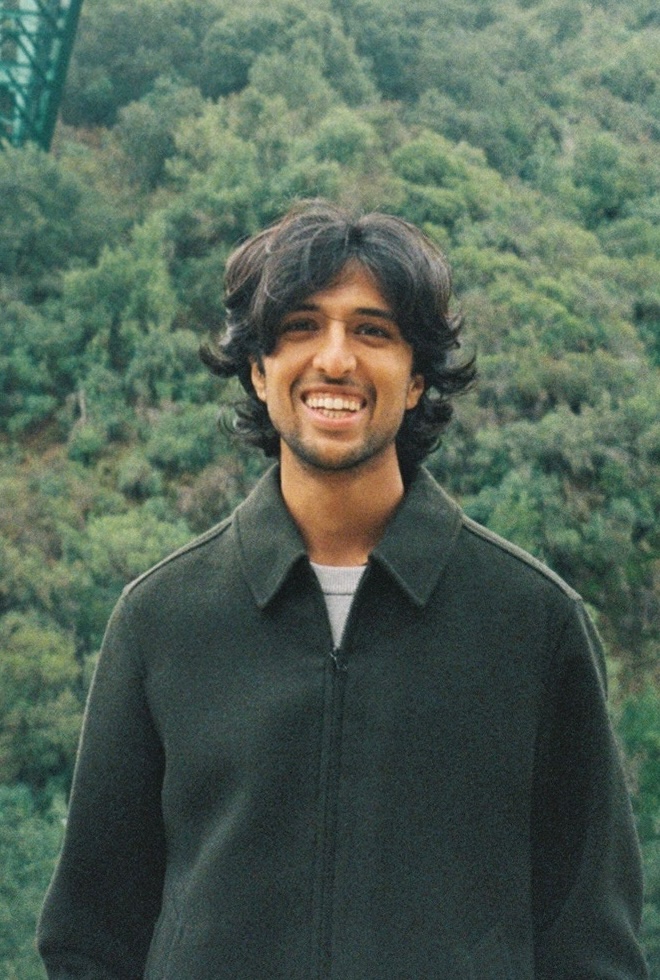}}]{Vade Shah} (Graduate Student Member, IEEE) received a B.S. in Electrical Engineering and Computer Sciences from UC Berkeley in 2022 and an M.S. in Electrical Engineering from UC Santa Barbara in 2024, where he is currently pursuing a Ph.D. under the supervision of Jason R. Marden. Vade is a recipient of the NSF GRFP (2024) and the SPIE Alexander Glass Best Oral Presentation Award (2024), and he was named a MS\&E Rising Star in 2024. His research interests focus on applying game theory to engineered systems.
\end{IEEEbiography}

\begin{IEEEbiography}[{\includegraphics[width=1in,height=1.25in,clip,keepaspectratio]{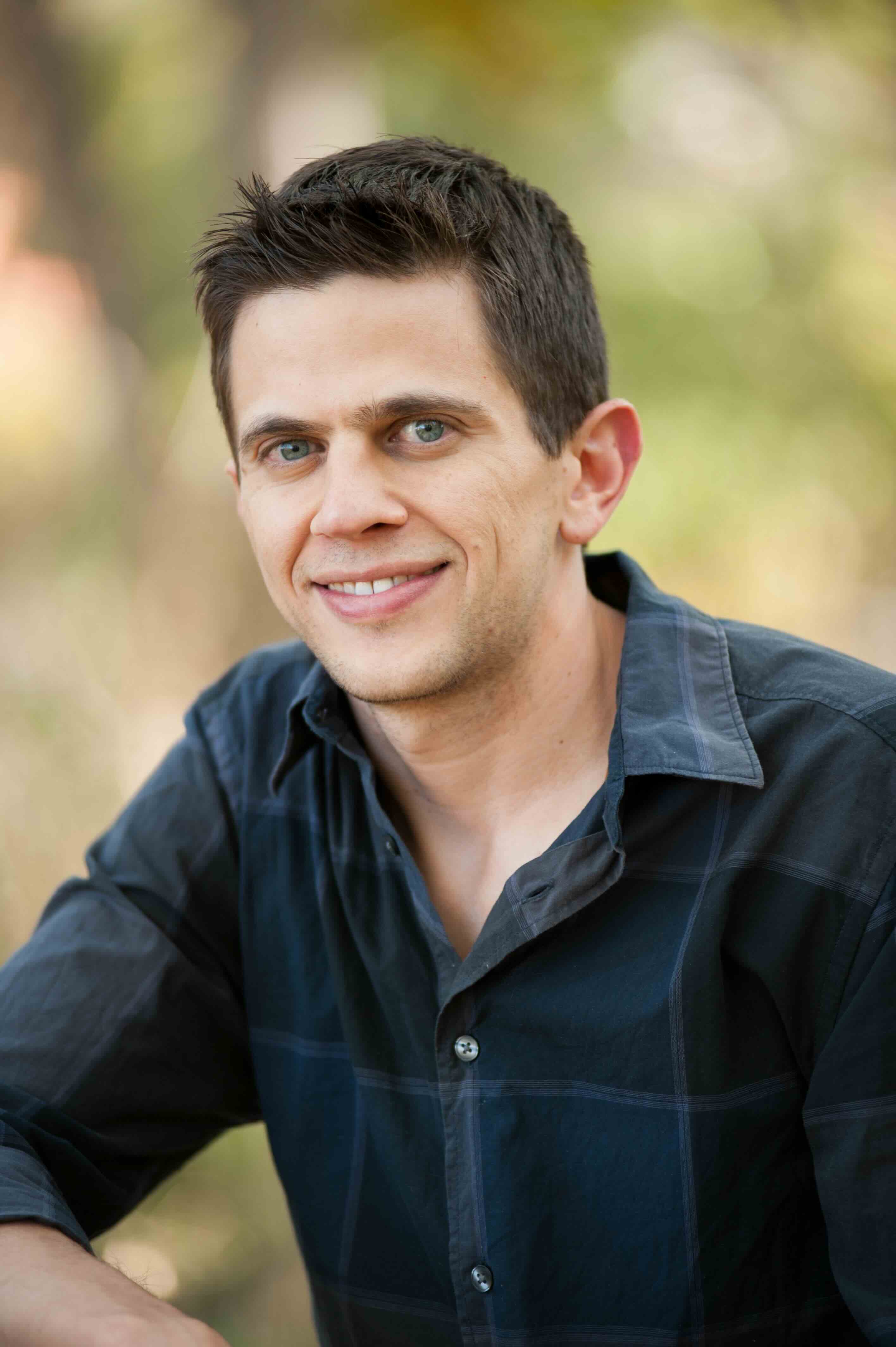}}]{Jason R. Marden} (Fellow, IEEE) is a Professor in the Department of Electrical and Computer Engineering at the University of California, Santa Barbara. Jason received a BS in Mechanical Engineering in 2001 from UCLA, and a PhD in Mechanical Engineering in 2007, also from UCLA, under the supervision of Jeff S. Shamma, where he was awarded the Outstanding Graduating PhD Student in Mechanical Engineering. After graduating from UCLA, he served as a junior fellow in the Social and Information Sciences Laboratory at the California Institute of Technology until 2010 when he joined the University of Colorado. Jason is a recipient of the NSF Career Award (2014), the ONR Young Investigator Award (2015), the AFOSR Young Investigator Award (2012), the American Automatic Control Council Donald P. Eckman Award (2012), and the SIAG/CST Best SICON Paper Prize (2015). Jason’s research interests focus on game theoretic methods for the control of distributed multiagent systems.
\end{IEEEbiography}

\end{document}